\newenvironment{propositionAppendix}[1]{{\noindent\bf Proposition \ref{#1}.}\it}{}
\begin{document}

\title{Three Variables Suffice for Real-Time Logic}
\author{Timos Antonopoulos\inst{1} \and Paul Hunter\inst{2} \and Shahab Raza\inst{1} \and James Worrell\inst{1}}

\institute{Department of Computer Science, Oxford University, UK\\
\email{\{timos,shahab,jbw\}@cs.ox.ac.uk}
\and
D\'{e}partement d'Informatique, Universit\'{e} Libre de Bruxelles, Belgium
\email{paul.hunter@ulb.ac.be}}

\maketitle

\begin{abstract}
  A natural framework for real-time specification is monadic
  first-order logic over the structure $(\mathbb{R},<,+1)$---the
  ordered real line with unary $+1$ function.  Our main result is that
  $(\mathbb{R},<,+1)$ has the 3-variable property: every monadic
  first-order formula with at most 3 free variables is equivalent over
  this structure to one that uses 3 variables in total.  As a
  corollary we obtain also the 3-variable property for the structure
  $(\mathbb{R},<,f)$ for any fixed linear function
  $f:\mathbb{R}\rightarrow\mathbb{R}$.  On the other hand, we exhibit a
  countable dense linear order $(E,<)$ and a bijection $f:E\rightarrow
  E$ such that $(E,<,f)$ does not have the $k$-variable property for
  any $k$.
\end{abstract}
\section{Introduction}
Monadic first-order logic is an expansion of first-order logic by
infinitely many unary predicate variables.  In this setting a class of
structures $\mathcal{C}$ is said to have the \emph{$k$-variable
  property} if every formula with at most $k$ free first-order
variables is equivalent over $\mathcal{C}$ to a formula with at most
$k$ first-order variables in total (allowing multiple binding
occurrences of the same variable).  The $k$-variable property for
monadic first-order logic over linearly ordered structures has been
studied
in~\cite{Dawar05,Gabbay81,GroheS05,HodkinsonS97,ImmermanK89,Poizat82,Rossman08},
among others.  In finite model theory the $k$-variable property plays
an important role in descriptive complexity.  Over infinite models it
is closely connected with expressive completeness of temporal logics.

It is well known that Linear Temporal Logic (LTL) with Stavi
modalities is expressively complete for monadic first-order logic over
the class of linear orders~\cite{GabbayPSS80,Kamp68}.  More precisely,
LTL is expressively complete for the class of monadic first-order
formulas with one free variable (corresponding to the fact that LTL
formulas are evaluated at a single point of a linear order).  The
translation from LTL to first-order logic is a straightforward
inductive construction that maps into the 3-variable fragment of
first-order logic.  It follows that every monadic first-order formula
with at most one free variable is equivalent to a 3-variable formula
over linear orders.  However this is a strictly weaker condition than
the 3-variable property in general: Hodkinson and
Simon~\cite{HodkinsonS97} give a class of partial orders over which
every monadic first-order formula with at most one free variable is
equivalent to a 3-variable formula, but which does not have the
$k$-variable property for any $k$.  Nevertheless the 3-variable
property \emph{does} hold over linear orders, as shown by
Poizat~\cite{Poizat82} and Immerman and Kozen~\cite{ImmermanK89},
using Ehrenfeucht-Fra\"{i}ss\'{e} games.

Going beyond pure linear orders, Venema~\cite{Venema90} gives a dense
linear order with a single equivalence relation over which monadic
first-order logic does not have the $k$-variable property for any $k$.
A more powerful result by Rossman~\cite{Rossman08} shows that the
class of finite linearly ordered graphs does not have the $k$-variable
property for any $k$, resolving a longstanding conjecture of
Immerman~\cite{Immerman82}.

In this paper we are concerned with monadic first-order logic over the
ordered reals with unary $+1$ function $(\mathbb{R},<,+1)$.  This
logic has been extensively studied in the context of real-time
verification.  An expansion of $(\mathbb{R},<,+1)$ with
interpretations of the unary predicate variables can be seen as a
real-time signal, with the unary predicates denoting propositions that
may or may not hold at any given time.  First-order logic over signals
can express both metric and order-theoretic temporal properties and is
an expressive meta-language into which many different real-time logics
can directly be translated~\cite{HirshfeldR12,HirshfeldR05}.  In
particular, first-order logic over signals is expressively equivalent
with Metric Temporal Logic (MTL)~\cite{Hunter13,HunterOW13}.

Our main result is that $(\mathbb{R},<,+1)$ has the $3$-variable
property.  For example, the property 
\[ \forall x_1 \exists x_2 \exists x_3 \exists x_4 
\left(x_4<x_1+1 \wedge
\bigwedge_{1\leq i\leq 3} x_i < x_{i+1} \wedge \bigwedge_{2\leq i \leq 4}
P(x_i) \right) \]
that $P$ is true at least $3$ times in every unit interval
can equivalently be written
\begin{align*}
\forall x \exists y  \big(x<y \wedge P(y) \wedge 
\exists z(y<z\wedge P(z) \wedge \exists y(z<y<x+1 \wedge P(y)))) \, .
\end{align*}

From the expressive completeness of MTL it follows that every monadic
first-order formula with at most one free variable is equivalent to a
3-variable formula over $(\mathbb{R},<,+1)$.  However, as remarked
above, this condition is weaker than the 3-variable property in
general.  Moreover the proof of expressive completeness of MTL
combines intricate syntactic manipulations of MTL formulas together
with technically involved results of~\cite{GabbayPSS80} for LTL\@.  On
the other hand, the model-theoretic argument given here, using
Ehrenfeucht-Fra\"{i}ss\'{e} games, is self-contained and exposes a
novel two-level compositional technique that can potentially be
applied in more general settings and to other ends (see the
Conclusion).

As a corollary of our main result we straightforwardly derive the
3-variable property for each structure $(\mathbb{R},<,f)$ with
$f:\mathbb{R}\rightarrow\mathbb{R}$ a linear function $f(x)=ax+b$.  We
believe that the result can be generalised to other linear orders and
suitably well-behaved functions.  However, unsurprisingly, the
property fails for sufficiently `wild' functions.  Adapting Venema's
construction~\cite{Venema90}, we give an example of a countable dense
linear order $E$ and a (far from monotone) bijection $f:E\rightarrow
E$ such that $(E,<,f)$ does not have the $k$-variable property for any
$k$.

The paper naturally divides into two parts.  Sections~\ref{sec:local}
to \ref{sec:main} are exclusively concerned with the
structure $(\mathbb{R},<,+1)$, while Sections~\ref{sec:arithmetic} and
\ref{sec:counterexample} consider other unary functions in place of
$+1$.

\section{Background}
\subsection{Ehrenfeucht-Fra\"{i}ss\'{e} Games}
\label{sec:EF}
Throughout the paper we work with a first-order signature $\sigma$ 
with a binary relation symbol $<$ and a unary function symbol $f$.
The monadic first-order language over $\sigma$ is defined as follows:
\begin{itemize}
\item There is an infinite collection of monadic predicate variables
  $P_1,P_2,\ldots$.
\item The atomic formulas are $x=y$, $x<y$, $P_n(x)$, and $x=f(y)$ for
first-order variables $x$ and $y$ and $n\in \mathbb{N}$.  
\item If $\varphi_1$ and $\varphi_2$ are formulas and $x$ is a
  variable then $\neg \varphi_1$, $\varphi_1\wedge \varphi_2$ and
  $\exists x\, \varphi_1$ are also formulas.
\end{itemize}

Referring to the restricted use of the function symbol $f$ in atomic
formulas, we say that the formulas above are \emph{unnested}.
The unnesting assumption essentially amounts to treating the function
symbol $f$ as a binary relation symbol.  We make this assumption as an
alternative to restricting to a purely relational signature.  The
unnesting assumption does not affect expressiveness since we can
translate an arbitrary formula to an equivalent unnested formula by
successively replacing atomic formulas $f^m(x)=f^n(y)$ with $m>0$ by
$\exists z\, (z=f(x) \wedge f^{m-1}(z)=f^n(y))$, and similarly for
$f^m(x)<f^n(y)$.  While this transformation may increase the
quantifier depth, it preserves the subclass of 3-variable formulas.

Let $\mathbf{A}=(A,<^\mathbf{A},f^\mathbf{A},\overline{P}^\mathbf{A})$ denote a
$\sigma$-structure expanded with interpretations of the monadic
predicate variables $P_1,P_2,\ldots$.  We call $\mathbf{A}$ a
\emph{labelled $\sigma$-structure}.  Given first-order variables
$x_1,\ldots,x_k$, an \emph{assignment} in $\mathbf{A}$ with domain
$\{x_1,\ldots,x_k\}$ is a tuple $\overline{u}=u_1 \ldots u_k$ in
$A^k$. Given another assignment $\overline{v}$ with the same domain in
a labelled $\sigma$-structure $\mathbf{B}$, we say that
$(\overline{u},\overline{v})$ is a \emph{partial isomorphism} between
$\mathbf{A}$ and $\mathbf{B}$ if $\mathbf{A} \models \varphi[\overline{u}]$ iff
$\mathbf{B} \models \varphi[\overline{v}]$ for all atomic formulas
$\varphi(x_1,\ldots,x_k)$.

The \emph{Ehrenfeucht-Fra\"{i}ss\'{e}} (EF) game on structures
$\mathbf{A}$ and $\mathbf{B}$ is played by two
players---\emph{Spoiler} and \emph{Duplicator}.\footnote{By
  convention, Spoiler is male and Duplicator is female.}  Each player
has a collection of pebbles, respectively labelled $x_1,x_2,\ldots$.
The game is played over a fixed number of rounds.  In each round
Spoiler chooses a structure and places a pebble on an element of the
structure (either an unused pebble or one that has already been
placed); Duplicator responds by placing a pebble with the same label
on some element of the other structure.  A placement of $k$ pebbles on
each structure naturally determines a pair of assignments
$(\overline{u},\overline{v})$, called a \emph{$k$-configuration}.
(Our notation for $k$-configurations leaves the structures
$\mathbf{A}$ and $\mathbf{B}$ implicit.)  If the configuration after
each round is a partial isomorphism then Duplicator wins, otherwise
Spoiler wins.  For each configuration $(\overline{u},\overline{v})$
and number of rounds $n$, exactly one of the players has a winning
strategy in the $n$-round game starting from
$(\overline{u},\overline{v})$ (see~\cite{ImmermanK89} for more
details).

A natural restriction on Ehrenfeucht-Fra\"{i}ss\'{e} games is to limit
each player to a fixed number of pebbles.  In the \emph{$k$-pebble
  game} both Spoiler and Duplicator possess only $k$ pebbles,
respectively labelled $x_1,\ldots,x_k$.  The following theorem shows
how Ehrenfeucht-Fra\"{i}ss\'{e} games can be used to characterise the
expressiveness of first-order logic according to the number of
variables.

\begin{theorem}[\cite{ImmermanK89}]
Let $\mathcal{C}$ be a class of $\sigma$-structures such that for all
$n$ there exists $m$ such that if Spoiler wins the $n$-round
Ehrenfeucht-Fra\"{i}ss\'{e} game on a pair of labelled structures from
$\mathcal{C}$ starting in a $k$-configuration
$(\overline{u},\overline{v})$, then he also wins the $m$-round
$k$-pebble game starting in $(\overline{u},\overline{v})$.  Then
$\mathcal{C}$ has the $k$-variable property.
\label{thm:EF}
\end{theorem}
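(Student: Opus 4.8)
The plan is to establish a game characterisation of the $k$-variable fragment by means of characteristic formulas for the $k$-pebble game, and then to use the hypothesis of the theorem to collapse this fragment onto the unrestricted one. Fix $k$ and a class $\mathcal{C}$ as in the statement, and let $\varphi(x_1,\ldots,x_k)$ be an arbitrary formula with at most $k$ free variables. It mentions only finitely many monadic predicates, say those in a finite set $S$; since the truth of $\varphi$ depends only on the reduct to $\sigma$ together with $S$, it suffices to work throughout with labelled structures whose monadic part interprets exactly $S$ (a $k$-variable formula equivalent to $\varphi$ over such structures mentions only $S$ and so is equivalent to $\varphi$ over all labelled structures). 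Let $n$ be the quantifier rank of $\varphi$ and let $m$ be the bound supplied by the hypothesis for this $n$.

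The first step is to construct, by induction on $m$, characteristic formulas for the $m$-round $k$-pebble game. For a labelled structure $\mathbf{A}$ and a $k$-tuple $\overline{u}$ over $\mathbf{A}$, let $\chi^0_{\mathbf{A},\overline{u}}(x_1,\ldots,x_k)$ be the conjunction of all atomic and negated atomic formulas over $x_1,\ldots,x_k$ that are true of $\overline{u}$ in $\mathbf{A}$; this is finite up to equivalence, since over the signature $\sigma$ and the predicate set $S$ there are only finitely many atomic formulas in $x_1,\ldots,x_k$ (here the unnesting convention keeps this list finite). For the inductive step put
\[
\chi^{m+1}_{\mathbf{A},\overline{u}} \;=\; \chi^{0}_{\mathbf{A},\overline{u}} \;\wedge\; \bigwedge_{i=1}^{k}\left(\bigwedge_{a\in A}\exists x_i\,\chi^{m}_{\mathbf{A},\overline{u}[u_i\mapsto a]} \;\wedge\; \forall x_i\bigvee_{a\in A}\chi^{m}_{\mathbf{A},\overline{u}[u_i\mapsto a]}\right),
\]
where $\overline{u}[u_i\mapsto a]$ denotes $\overline{u}$ with its $i$th coordinate replaced by $a$; by the induction hypothesis only finitely many $\chi^m$ occur up to equivalence, so this is again a finite expression. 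A routine induction on $m$ then yields: (i) each $\chi^m_{\mathbf{A},\overline{u}}$ uses only the variables $x_1,\ldots,x_k$ and has quantifier rank at most $m$; (ii) for every labelled $\mathbf{B}$ and $k$-tuple $\overline{v}$ we have $\mathbf{B}\models\chi^m_{\mathbf{A},\overline{u}}[\overline{v}]$ if and only if Duplicator wins the $m$-round $k$-pebble game on $\mathbf{A}$ and $\mathbf{B}$ starting from $(\overline{u},\overline{v})$; and (iii) only finitely many $\chi^m$ arise up to logical equivalence, and every $k$-variable formula over $S$ of quantifier rank at most $m$ is equivalent to a disjunction of them.

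Now let $\psi$ be the disjunction of all characteristic formulas $\chi^m_{\mathbf{A},\overline{u}}$ with $\mathbf{A}\in\mathcal{C}$ and $\mathbf{A}\models\varphi[\overline{u}]$, taking one representative per logical equivalence class so that, by (iii), $\psi$ is a finite disjunction; by (i) it is a $k$-variable formula. It remains to show $\mathbf{B}\models\psi[\overline{v}]$ iff $\mathbf{B}\models\varphi[\overline{v}]$ for every labelled $\mathbf{B}\in\mathcal{C}$ and $k$-tuple $\overline{v}$. If $\mathbf{B}\models\varphi[\overline{v}]$, then with $\mathbf{A}=\mathbf{B}$ and $\overline{u}=\overline{v}$ the formula $\chi^m_{\mathbf{B},\overline{v}}$ is a disjunct of $\psi$ (up to equivalence), and $\mathbf{B}\models\chi^m_{\mathbf{B},\overline{v}}[\overline{v}]$ by (ii), since Duplicator wins the $k$-pebble game from the diagonal configuration $(\overline{v},\overline{v})$ merely by copying Spoiler's moves; hence $\mathbf{B}\models\psi[\overline{v}]$. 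Conversely, suppose $\mathbf{B}\models\chi^m_{\mathbf{A},\overline{u}}[\overline{v}]$ for some disjunct of $\psi$, so $\mathbf{A}\in\mathcal{C}$ and $\mathbf{A}\models\varphi[\overline{u}]$. By (ii), Duplicator wins the $m$-round $k$-pebble game on $\mathbf{A}$ and $\mathbf{B}$ from $(\overline{u},\overline{v})$; by the contrapositive of the hypothesis (for this $n$ and $m$) Duplicator therefore also wins the $n$-round unrestricted Ehrenfeucht-Fra\"{i}ss\'{e} game from $(\overline{u},\overline{v})$; and by the classical Ehrenfeucht-Fra\"{i}ss\'{e} theorem $\overline{u}$ in $\mathbf{A}$ and $\overline{v}$ in $\mathbf{B}$ then satisfy exactly the same first-order formulas of quantifier rank at most $n$ with free variables among $x_1,\ldots,x_k$. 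Since $\varphi$ has quantifier rank $n$ and $\mathbf{A}\models\varphi[\overline{u}]$, we get $\mathbf{B}\models\varphi[\overline{v}]$, completing the proof.

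I expect the main effort to be in the inductive construction and verification of the bounded-variable characteristic formulas --- properties (ii) and (iii) --- and especially in confirming that only finitely many of them arise despite the presence of the function symbol $f$ and the infinite supply of monadic predicates: the resolution is that fixing the finite set $S$ and the finite variable set $\{x_1,\ldots,x_k\}$ already bounds the number of atomic formulas, which seeds the induction, after which the argument is bookkeeping of a standard kind. The only genuinely non-routine point is the observation that the hypothesis of the theorem is precisely the link needed between ``Duplicator wins the $k$-pebble game'' and ``Duplicator wins the unrestricted game'', and it is this link that lets the truth of $\varphi$ be transported from $\mathbf{A}$ to $\mathbf{B}$ and hence makes the finite disjunction $\psi$ capture $\varphi$.
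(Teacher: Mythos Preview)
The paper does not give its own proof of this theorem: it is quoted with a citation to Immerman and Kozen~\cite{ImmermanK89} and then used as a black box. So there is no in-paper argument to compare against.

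Your proposal is correct and is essentially the standard Immerman--Kozen argument via bounded-variable Hintikka (characteristic) formulas. The key ingredients are all in place: restricting to the finitely many monadic predicates actually occurring in $\varphi$, using the unnesting convention so that only finitely many atomic formulas exist in the fixed variables $x_1,\ldots,x_k$, building $\chi^m$ by the usual back-and-forth recursion (which stays inside the $k$-variable fragment because each step only re-quantifies one of the $x_i$), and then taking $\psi$ to be the finite disjunction of the $\chi^m$-types of satisfying tuples. The use of the hypothesis in the contrapositive direction to transfer truth of $\varphi$ from $(\mathbf{A},\overline{u})$ to $(\mathbf{B},\overline{v})$ via the unrestricted $n$-round game and the classical Ehrenfeucht--Fra\"{i}ss\'{e} theorem is exactly the intended bridge. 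Nothing is missing.
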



In the remainder of this section we specialise our attention to the
$\sigma$-structure $(\mathbb{R},<,+1)$.  In this case we call a
labelled $\sigma$-structure a \emph{signal}.

In addition to $k$-pebble games, on signals we introduce another
restriction of Ehrenfeucht-Fra\"{i}ss\'{e} games.  Given an assignment
$\overline{u} \in \mathbb{R}^k$ with domain $\{x_1,\ldots,x_k\}$, the
\emph{diameter} of $\overline{u}$ is $\mathit{diam}(\overline{u}) =
\max\{ |u_i - u_j|: 1\leq i,j \leq k\}$.  Given $D \in \mathbb{R}$,
the \emph{$D$-local game} on a pair of signals is such that Spoiler
and Duplicator must maintain the invariant that all assignments have
diameter at most $D$.

We will always explicitly indicate any restrictions on the number of
pebbles or the diameter of configurations in games: thus the default
notion of Ehrenfeucht-Fra\"{i}ss\'{e} game is without restriction on
the number of pebbles or the diameter.

Recall that our main result is that $(\mathbb{R},<,+1)$ has the
3-variable property.  The main conceptual insight underlying the proof
is that one should first prove the 3-variable property for ``local''
formulas.  We treat locality semantically through the notion of local
EF games, as defined above, but intuitively a local formula is one
that asserts properties of elements at a bounded distance from one
another.  For example, $\exists x\exists y\,(P(x) \wedge Q(y) \wedge
x,<y<x+1)$ is local but $\exists x\exists y\, (P(x) \wedge Q(y))$ is
not local.

We prove the 3-variable property for local formulas by a compositional
argument based on the fractional-part preorder on $\mathbb{R}$.  We
then extend the 3-variable property to all formulas by adapting the
well-known composition lemma for sums of linear orders to the
structure $(\mathbb{R},<,+1)$.  Roughly speaking, this second
compositional lemma shows that Duplicator strategies on summands can
be composed provided that there is sufficient distance between pebbles
in different summands.  However this precondition is not always met
and here it is crucial that we have already established the 3-variable
property for local formulas.

\subsection{Interpretations}
\label{sec:interpretation}
In this section we briefly deviate from the setting of linear orders
and unary functions to recall from~\cite[Chapter 4.3]{Hodges} the
notion of an interpretation of one first-order structure in another.

Let $\sigma_1$ and $\sigma_2$ be signatures, $\mathbf{A}$ a
$\sigma_1$-structure with domain $A$, $\mathbf{B}$ a $\sigma_2$-structure with domain
$B$, and $n$ a positive integer.  An $n$-dimensional
\emph{interpretation} $\Gamma$ of $\mathbf{B}$ in $\mathbf{A}$
consists of three items:
\begin{itemize}
\item a $\sigma_1$-formula $\partial_\Gamma(x_1,\ldots,x_n)$ denoting
the \emph{domain} of the interpretation, which is the set $\partial_\Gamma(A^n) := \{
  \overline{a}\in A^n : \mathbf{A} \models
  \partial_\Gamma[\overline{a}]\}$.
\item for each unnested atomic $\sigma_2$-formula
  $\varphi(x_1,\ldots,x_m)$, a $\sigma_1$-formula
  $\varphi_\Gamma(\overline{x}_1,\ldots,\overline{x}_m)$ in which the
  $\overline{x}_i$ are disjoint $n$-tuples of distinct variables,
\item a surjective \emph{coding map} $f_\Gamma : \partial_\Gamma(A^n) \rightarrow
  B$ such that for all unnested atomic $\sigma_2$-formulas $\varphi$ and
  all $\overline{a}_i \in \partial_\Gamma(A^n)$,
\[ \mathbf{B} \models \varphi[f_\Gamma \overline{a}_1,\ldots,f_\Gamma \overline{a}_m] \mbox{ iff }
    \mathbf{A} \models \varphi_\Gamma
    [\overline{a}_1,\ldots,\overline{a}_m] \, .\]
\end{itemize}
\section{From Local Games to 3-Pebble Games}
\label{sec:local}
In this section we consider an Ehrenfeucht-Fra\"{i}ss\'{e} game on two
signals $\mathbf{A}$ and $\mathbf{B}$.  Here $\overline{u}=u_1\ldots
u_s$ will always denote an assignment in $\mathbf{A}$ and
$\overline{v}=v_1\ldots v_s$ will always denote an assignment in
$\mathbf{B}$.

Write $\overline{u} \equiv \overline{v}$ if $\overline{u}$ and
$\overline{v}$ are indistinguishable by difference constraints, that
is, $u_i-u_j < c \Leftrightarrow v_i-v_j < c$ and $u_i-u_j = c
\Leftrightarrow v_i-v_j = c$ for all constants $c\in \mathbb{Z}$ and
indices $1\leq i,j \leq s$.  Equivalently, $\overline{u} \equiv
\overline{v}$ if and only if $\lfloor u_i-u_j \rfloor = \lfloor
v_i-v_j \rfloor$ for all indices $1 \leq i,j \leq s$.\footnote{Note
  that $\lceil u_i-u_j \rceil = \lceil v_i-v_j \rceil$ if and only if
  $\lfloor u_j-u_i \rfloor = \lfloor v_j-v_i \rfloor$, so there is no
  need to add a separate clause for ceiling in the characterisation of
  $\equiv$.}  Assignments that are indistinguishable by difference
constraints are, in particular, ordered the same way.

 Define the \emph{fractional part} of $u\in \mathbb{R}$ by
 $\mathrm{frac}(u)=u-\lfloor u \rfloor$. The proof of the following
 proposition can be found in the Appendix.
\begin{proposition}\label{prop:relative-frac}
Let $\overline{u}=u_1\ldots u_s$ and $\overline{v}=v_1\ldots v_s$ be
two assignments with $\overline{u}\equiv\overline{v}$. Then 
\[\mathrm{frac}(u_i-u_k)<\mathrm{frac}(u_j-u_k)
   \;\Leftrightarrow\;
   \mathrm{frac}(v_i-v_k)<\mathrm{frac}(v_j-v_k)\]
for all indices $i,j,k \in \{1,\ldots,s\}$.
\end{proposition}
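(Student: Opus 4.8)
The plan is to reduce the comparison of fractional parts to an ordinary difference constraint whose constant is an \emph{integer}, since such constraints are preserved by the hypothesis $\overline{u}\equiv\overline{v}$ essentially by definition.

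First I would fix indices $i,j,k$ and abbreviate $a=u_i-u_k$ and $b=u_j-u_k$, so that $a-b=u_i-u_j$. Using $\mathrm{frac}(t)=t-\lfloor t\rfloor$, I compute
\[ \mathrm{frac}(a)-\mathrm{frac}(b) = (a-b) - \bigl(\lfloor a\rfloor - \lfloor b\rfloor\bigr) = (u_i-u_j) - N, \]
where $N \defequals \lfloor u_i-u_k\rfloor - \lfloor u_j-u_k\rfloor \in \mathbb{Z}$. Since both fractional parts lie in $[0,1)$, their difference lies in $(-1,1)$, and so $\mathrm{frac}(u_i-u_k)<\mathrm{frac}(u_j-u_k)$ holds if and only if $u_i-u_j<N$.

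The crux is then to observe that $N$ depends only on $\lfloor u_i-u_k\rfloor$ and $\lfloor u_j-u_k\rfloor$, and by the floor characterisation of $\overline{u}\equiv\overline{v}$ these equal $\lfloor v_i-v_k\rfloor$ and $\lfloor v_j-v_k\rfloor$; hence the \emph{same} integer $N$ governs the $\overline{v}$ side, and $\mathrm{frac}(v_i-v_k)<\mathrm{frac}(v_j-v_k)$ if and only if $v_i-v_j<N$. Because $N\in\mathbb{Z}$, the equivalence $u_i-u_j<N \Leftrightarrow v_i-v_j<N$ is exactly an instance of the defining condition of $\equiv$, and the proposition follows. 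The degenerate cases $i=k$, $j=k$, or $i=j$ (where a fractional part is $0$ or both sides are trivially false) are covered by the very same computation.

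I do not expect a real obstacle here: the only thing requiring a moment's care is recognising that the constant arising in the reduced constraint is $\equiv$-invariant, which is what makes the reduction legitimate. There is no need to treat the equality case $\mathrm{frac}(u_i-u_k)=\mathrm{frac}(u_j-u_k)$ separately, since this amounts to $u_i-u_j=N$, and $\equiv$ also transfers equalities with integer constants.
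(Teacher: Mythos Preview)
Your proof is correct and is essentially the same argument as the paper's: both rewrite $\mathrm{frac}(u_i-u_k)<\mathrm{frac}(u_j-u_k)$ as the integer difference constraint $u_i-u_j<\lfloor u_i-u_k\rfloor-\lfloor u_j-u_k\rfloor$, observe that the right-hand side is an integer that is preserved under $\equiv$ via the floor characterisation, and then apply the defining property of $\equiv$ to transfer the constraint to $\overline{v}$. Your explicit naming of the constant $N$ and the remark that it is $\equiv$-invariant make the key point slightly more visible, but the underlying steps match the paper's chain of equivalences line for line.
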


 We say that $u_1\ldots u_s$ is in \emph{increasing order} if
 $\mathrm{frac}(u_i-u_1) \leq \mathrm{frac}(u_{i+1}-u_1)$ for
 $i=1,\ldots,s-1$.  Intuitively $u_1\ldots u_s$ is in increasing order
 if it is listed in increasing order of fractional parts relative to
 $u_1$.  Note that if $u_1\ldots u_s$ is in increasing order then any
 cyclic permutation is also in increasing order.  By
 Proposition~\ref{prop:relative-frac}, if $\overline{u}\equiv
 \overline{v}$ then $\overline{u}$ and $\overline{v}$ can both be
 brought into increasing order by a common permutation.


The following proposition can be seen as a compositional lemma for
$\equiv$.  The proof can be found in the Appendix.

\begin{proposition}
	Suppose that $u_1 \ldots u_s$ and $v_1 \ldots v_s$ are both increasing
	and that $u_1 \ldots u_m \equiv v_1 \ldots v_m$ and
	$u_m \ldots u_s \equiv v_m \ldots v_s$ for some $m$, $1 \leq
	m \leq s$.  Then $u_1 \ldots u_s \equiv v_1 \ldots v_s$.
\label{prop:composition}
\end{proposition}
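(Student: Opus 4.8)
The plan is to verify the defining condition of $\equiv$ directly, i.e.\ to show $\lfloor u_i-u_j\rfloor=\lfloor v_i-v_j\rfloor$ for all $i,j\in\{1,\ldots,s\}$. When $i$ and $j$ both lie in $\{1,\ldots,m\}$ this is immediate from $u_1\ldots u_m\equiv v_1\ldots v_m$, and when both lie in $\{m,\ldots,s\}$ it follows from $u_m\ldots u_s\equiv v_m\ldots v_s$; the cases $m=1$ and $m=s$ are degenerate since one hypothesis is then already the conclusion, so we may assume $1<m<s$. Thus the real work is a pair $\{i,j\}$ that \emph{straddles} $m$, say $i<m<j$, and here we must establish \emph{both} orderings $\lfloor u_i-u_j\rfloor=\lfloor v_i-v_j\rfloor$ and $\lfloor u_j-u_i\rfloor=\lfloor v_j-v_i\rfloor$, since $\lfloor\cdot\rfloor$ is not symmetric.

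First I would record an auxiliary fact: for \emph{any} indices $a,b$ we have $u_a-u_b\in\mathbb{Z}$ iff $v_a-v_b\in\mathbb{Z}$. When $a,b$ lie in a common half this is immediate from the corresponding hypothesis, using $x\in\mathbb{Z}\Leftrightarrow\lfloor x\rfloor+\lfloor -x\rfloor=0$. For a straddling pair with $u_a-u_b\in\mathbb{Z}$ we have $\mathrm{frac}(u_a-u_1)=\mathrm{frac}(u_b-u_1)$, and since $u_1\ldots u_s$ is increasing the map $k\mapsto\mathrm{frac}(u_k-u_1)$ is non-decreasing, hence constant on the index interval between $a$ and $b$, which contains $m$; therefore $u_m-u_a\in\mathbb{Z}$ and $u_m-u_b\in\mathbb{Z}$. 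Each of these is a within-half integrality, so it transfers to the $v$-side, and adding gives $v_a-v_b\in\mathbb{Z}$; the converse is symmetric. (This is the one place where the global increasing-order hypothesis, rather than just the two local $\equiv$'s, is used.)

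The core step pivots at $m$ via the identity $\lfloor x+y\rfloor=\lfloor x\rfloor+\lfloor y\rfloor+\epsilon$, where $\epsilon\in\{0,1\}$ is $1$ exactly when $\mathrm{frac}(x)+\mathrm{frac}(y)\geq 1$. Writing $u_i-u_j=(u_i-u_m)+(u_m-u_j)$, the two summand floors $\lfloor u_i-u_m\rfloor$ and $\lfloor u_m-u_j\rfloor$ lie within a half and so transfer to the $v$-side; it remains to show the carry $\epsilon$ transfers. If $u_i-u_m$ or $u_m-u_j$ is an integer then $\epsilon=0$, and the relevant integrality transfers, so $\epsilon$ is $0$ on both sides. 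Otherwise $\mathrm{frac}(u_m-u_j)=1-\mathrm{frac}(u_j-u_m)$, so $\epsilon=1$ iff $\mathrm{frac}(u_i-u_m)\geq\mathrm{frac}(u_j-u_m)$; invoking the fact that the cyclic shift $u_m u_{m+1}\ldots u_s u_1\ldots u_{m-1}$ is again in increasing order yields $\mathrm{frac}(u_j-u_m)\leq\mathrm{frac}(u_s-u_m)\leq\mathrm{frac}(u_1-u_m)\leq\mathrm{frac}(u_i-u_m)$, so $\epsilon=1$ unconditionally, and verbatim the same holds on the $v$-side. For the reverse difference $u_j-u_i=(u_j-u_m)+(u_m-u_i)$ the analogous computation gives $\epsilon=1$ iff $\mathrm{frac}(u_j-u_m)\geq\mathrm{frac}(u_i-u_m)$, but the same chain of inequalities now forces $\epsilon=1$ iff $\mathrm{frac}(u_j-u_m)=\mathrm{frac}(u_i-u_m)$, i.e.\ iff $u_i-u_j\in\mathbb{Z}$, which transfers by the auxiliary fact. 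In every case $\lfloor u_i-u_j\rfloor=\lfloor v_i-v_j\rfloor$, completing the proof.

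I expect the main obstacle to be exactly this carry-bit analysis for the \emph{reverse} difference: there $\epsilon$ is genuinely not constant, so one is forced through the auxiliary integrality-transfer fact, and getting that fact right is what makes the interplay between the increasing-order hypothesis and the two local $\equiv$'s essential. The rest is bookkeeping with $\mathrm{frac}$ and $\lfloor\cdot\rfloor$.
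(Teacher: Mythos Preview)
Your proof is correct and pivots at $m$ just as the paper does, but you take a longer route. The paper's key observation is that for $i\leq m<j$ the increasing hypothesis forces $\mathrm{frac}(u_j-u_i)=\mathrm{frac}(u_j-u_m)+\mathrm{frac}(u_m-u_i)$ with \emph{no carry}: since $\mathrm{frac}(u_k-u_1)$ is non-decreasing in $k$, each of these three fractional parts equals a telescoping difference $\mathrm{frac}(u_\cdot-u_1)-\mathrm{frac}(u_\cdot-u_1)$ lying in $[0,1)$. Hence $\lfloor u_j-u_i\rfloor=\lfloor u_j-u_m\rfloor+\lfloor u_m-u_i\rfloor$ immediately, and the same identity shows $u_j-u_i\in\mathbb{Z}$ iff both summands are integers---which is exactly your auxiliary integrality fact, obtained for free. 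The paper does not spell out the reverse floor $\lfloor u_i-u_j\rfloor$; it follows from $\lfloor -x\rfloor=-\lfloor x\rfloor-[x\notin\mathbb{Z}]$ once integrality transfers. Your explicit carry-bit analysis in both directions is sound, and has the merit of making the reverse direction visible, but the two case splits and the separate auxiliary lemma are all subsumed by the paper's single no-carry identity.
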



Proposition~\ref{prop:metric} and Corollary~\ref{corl:metric} show
that three pebbles suffice to determine equivalence of configurations
under the relation $\equiv$.

\begin{proposition}
Let $n \in \mathbb{N}$.  Consider a 2-configuration $(u_1u_2,v_1v_2)$
such that either (i)~$u_1-u_2 < c$ and $v_1-v_2 \not< c$ for some
non-negative integer $c<2^n$ or (ii)~$u_1-u_2=c$ and $v_1-v_2 \neq c$
for some non-negative integer $c\leq 2^n$.  Then Spoiler wins the
$n$-round 3-pebble game from $(u_1u_2,v_1v_2)$.
\label{prop:metric}
\end{proposition}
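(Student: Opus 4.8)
The plan is to prove the proposition by induction on $n$, treating clauses~(i) and~(ii) simultaneously. For the base case $n=0$, each hypothesis already forces $(u_1u_2,v_1v_2)$ to fail to be a partial isomorphism: in~(i) necessarily $c=0$, so $u_1<u_2$ while $v_1\not<v_2$; in~(ii) we have $c\in\{0,1\}$, and the atomic formula $x_1=x_2$ (if $c=0$) or $x_1=f(x_2)$ (if $c=1$) distinguishes the two configurations. Hence Spoiler wins the $0$-round game outright.

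For the inductive step, assume the proposition for $n$ and take a configuration satisfying the hypotheses for $n+1$. If $c=0$ --- or $c=1$ in case~(ii) --- the configuration is again not a partial isomorphism and Spoiler wins immediately, so assume $c\geq 1$. Spoiler's move is to bisect $c$: put $a=\lceil c/2\rceil$ and $b=\lfloor c/2\rfloor$, so $a+b=c$ and $a\geq 1$. A short calculation from the bound on $c$ gives $a\leq 2^n$ in both clauses, $b\leq 2^n-1$ when $c\leq 2^{n+1}-1$ (case~(i)), and $b\leq 2^n$ when $c\leq 2^{n+1}$ (case~(ii)).

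In case~(i), where $u_1-u_2<c$ and $v_1-v_2\geq c$, Spoiler plays his third pebble $x_3$ in $\mathbf{B}$ at $v_3=v_1-a$, so that $v_1-v_3=a$ exactly and $v_3-v_2=(v_1-v_2)-a\geq b$. For any response $u_3$ in $\mathbf{A}$, we cannot have both $u_1-u_3\geq a$ and $u_3-u_2\geq b$, since that would give $u_1-u_2\geq c$; hence $u_1-u_3<a$ or $u_3-u_2<b$. In the first subcase $(v_1v_3,u_1u_3)$ --- with the two structures interchanged --- satisfies hypothesis~(ii) with constant $a\leq 2^n$; in the second subcase $(u_3u_2,v_3v_2)$ satisfies hypothesis~(i) with constant $b<2^n$. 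Either way the induction hypothesis, together with the symmetry of the EF game under exchanging the two structures, gives Spoiler a win in the remaining $n$ rounds; since the new $2$-configuration involves only two of the three pebbles, the third remains available and no more than three pebbles are ever used. Case~(ii), where $u_1-u_2=c$ and $v_1-v_2\neq c$, is handled symmetrically: Spoiler plays $x_3$ in $\mathbf{A}$ at $u_3=u_1-a$, so $u_1-u_3=a$ and $u_3-u_2=b$; for any response $v_3$ we have $(v_1-v_3)+(v_3-v_2)=v_1-v_2\neq a+b$, so $v_1-v_3\neq a$ or $v_3-v_2\neq b$, producing a configuration satisfying~(ii) with constant $a\leq 2^n$ or $b\leq 2^n$, to which the induction hypothesis applies.

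The one place that requires genuine care is the boundary case $c=2^{n+1}-1$ of clause~(i): there $a=2^n$, and no way of writing $c$ as a sum of two non-negative integers keeps \emph{both} summands below $2^n$, so a naive reduction to two instances of clause~(i) cannot succeed. This is exactly where the asymmetry between the two clauses is needed: by choosing $v_3=v_1-a$ we make $v_1-v_3$ equal $a$ \emph{exactly}, so the ``large half'' becomes an instance of clause~(ii), which tolerates the constant $2^n$, rather than an instance of clause~(i), which does not. The remaining obligations --- the elementary bounds on $a$ and $b$, and the bookkeeping in the subcases --- are routine.
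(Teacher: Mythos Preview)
Your proof is correct and follows the same approach as the paper: induction on $n$, with Spoiler bisecting $c$ into $\lceil c/2\rceil$ and $\lfloor c/2\rfloor$ so that one half reduces to clause~(ii) (which tolerates the constant $2^{n}$) and the other to clause~(i). The only difference is that in case~(i) the paper has Spoiler play in $\mathbf{A}$, setting $u_1-u_3=\lceil c/2\rceil$ exactly, so the reduction to clause~(ii) applies directly without swapping structures; your version plays in $\mathbf{B}$ and then appeals to the symmetry of the EF game, which is equally valid.
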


\begin{proof}
The proof is by induction on $n$.

Base case ($n=0$). Under either assumption (i) or (ii) the
configuration $(u_1u_2,v_1v_2)$ is not a partial isomorphism, and is
therefore immediately winning for Spoiler in the 3-pebble game.

Induction step ($n\geq 1$). Suppose $u_1-u_2<c$ but $v_1-v_2 \not< c$,
where $c<2^n$.  Write $c'=\lfloor c/2 \rfloor$, so that $c'<2^{n-1}$
and $c-c' \leq 2^{n-1}$.  Suppose that Spoiler places a pebble on
$u_3$ such that $u_1-u_3=c-c'$ and $u_3-u_2 < c'$.  Since $v_1-v_2
\not< c$, for any response $v_3$ of Duplicator we either have
$v_1-v_3\neq c-c'$ or $v_3-v_2 \not< c'$.  In the first case, by the
induction hypothesis, $(u_1u_3,v_1v_3)$ is winning in $n-1$ rounds for
Spoiler; likewise in the second case $(u_2u_3,v_2v_3)$ is winning in
$n-1$ rounds for Spoiler.  Thus in either case $(u_1u_2u_3,v_1v_2v_3)$
is winning in $n-1$ rounds for Spoiler.  We conclude that
$(u_1u_2,v_1v_2)$ is winning in $n$ rounds for Spoiler.  This handles
(i); Case (ii) is almost identical.  \qed
\end{proof}

\begin{corollary}
Let $(\overline{u},\overline{v})$ be a $3$-configuration such that
$\overline{u} \not\equiv \overline{v}$ and at least one of
$\overline{u}$ and $\overline{v}$ has diameter at most $2^m$.  Then
Spoiler wins the $m$-round 3-pebble game from
$(\overline{u},\overline{v})$.
\label{corl:metric}
\end{corollary}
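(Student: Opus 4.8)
The plan is to reduce Corollary~\ref{corl:metric} to Proposition~\ref{prop:metric} applied to a single well-chosen pair of pebbles. First, if $(\overline u,\overline v)$ is not already a partial isomorphism then the configuration is immediately winning for Spoiler and there is nothing to prove, so assume it is a partial isomorphism. Since $\overline u\not\equiv\overline v$, there are indices $i,j$ with $\lfloor u_i-u_j\rfloor\neq\lfloor v_i-v_j\rfloor$, and without loss of generality $\overline u$ is the assignment of diameter at most $2^m$, so that $|u_i-u_j|\le 2^m$.

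Next I would convert this floor discrepancy into a difference constraint on the pair $(x_i,x_j)$ whose integer parameter is small enough for Proposition~\ref{prop:metric} to apply with $n=m$. Using $|u_i-u_j|\le 2^m$ together with the fact that some integer distinguishes $u_i-u_j$ from $v_i-v_j$, a short case analysis --- on whether $u_i-u_j$ is an integer, on its sign, and on which of the two structures should be listed first --- yields a non-negative integer $c\le 2^m$ and an orientation of the pair for which one of the two clauses of Proposition~\ref{prop:metric} applies: clause~(ii) whenever a suitable equality constraint is available, and clause~(i) otherwise. Proposition~\ref{prop:metric} then tells us that Spoiler wins the $m$-round $3$-pebble game starting from the $2$-configuration $(u_iu_j,v_iv_j)$ (resp. $(v_iv_j,u_iu_j)$).

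Finally I would lift this strategy back to the full $3$-configuration $(\overline u,\overline v)$. The recursion in the proof of Proposition~\ref{prop:metric} only ever manipulates the two pebbles on the distinguished pair together with one auxiliary pebble, and in the $3$-pebble game Spoiler can reuse the remaining third pebble for that role, simply moving it from its current position whenever the recursion prescribes a placement. Since an extra pebble already on the board only adds partial-isomorphism constraints, it can never turn a Spoiler win into a Duplicator win; in particular every base-case configuration reached in the recursion is non-isomorphic already on the distinguished pair, hence non-isomorphic outright. Therefore Spoiler wins the $m$-round $3$-pebble game from $(\overline u,\overline v)$.

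The step I expect to be most delicate is the conversion in the second paragraph. Proposition~\ref{prop:metric} treats its two clauses asymmetrically --- the strict bound $c<2^n$ for strict inequalities versus the non-strict bound $c\le 2^n$ for equalities --- and one controls the diameter of only one of the two assignments, so some care is required to always land inside the applicable clause with exactly $n=m$.
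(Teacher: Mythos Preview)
Your approach is essentially the paper's: extract indices $i,j$ and a distinguishing difference constraint with non-negative integer parameter bounded via the diameter hypothesis, then invoke Proposition~\ref{prop:metric}. The paper's proof is even terser---it simply asserts that one can take $c\le 2^m$ and applies Proposition~\ref{prop:metric} directly to $(\overline u,\overline v)$, leaving both the lifting from the $2$-configuration and the boundary bookkeeping you flag in your final paragraph implicit.
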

\begin{proof}
  Since $\overline{u} \not\equiv \overline{v}$, there are indices $i,j$
  such that $u_i-u_j \sim c$ and $v_i-v_j \not\sim c$ for some
  non-negative integer constant $c$ and comparison operator
  $\mathop{\sim}\in\{<,=\}$.  Moreover, since at least one of
  $\overline{u}$ and $\overline{v}$ has diameter at most $2^m$, we can
  assume that $c \leq 2^m$.  But then Spoiler wins the $m$-round
  3-pebble game from $(\overline{u},\overline{v})$ by
  Proposition~\ref{prop:metric}. \qed
\end{proof}

One can think of following proposition as showing the 3-variable
property for local formulas.  The proof uses the compositional
principle in Proposition~\ref{prop:composition}.

\begin{proposition}
Let $(\overline{u},\overline{v})$ be a 3-configuration of diameter at
most $2^m$.  If Spoiler
wins the $n$-round $2^m$-local game from $(\overline{u},\overline{v})$
then he wins the $(m+n)$-round 3-pebble game from
$(\overline{u},\overline{v})$.
\label{prop:local}
\end{proposition}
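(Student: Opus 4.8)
The proof is by induction on the number of rounds $n$ of the $2^m$-local game. The key idea is to simulate Spoiler's winning strategy in the $2^m$-local game using the $3$-pebble game, spending the ``extra'' $m$ rounds whenever the local-game strategy asks Spoiler to place a pebble at a position that is \emph{not} within diameter $2^m$ of the currently pebbled configuration --- in which case we invoke Corollary~\ref{corl:metric} to win in $m$ rounds. The difficulty, of course, is that in the $3$-pebble game Spoiler has only three pebbles, so when he wishes to move one he must pick one of the three currently on the board to reuse, whereas a priori the local-game strategy might expect him to keep all three fixed and introduce a fourth.

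\medskip

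\noindent\textbf{Base case.} When $n=0$, Spoiler wins the $0$-round $2^m$-local game from $(\overline{u},\overline{v})$ only if this configuration is not a partial isomorphism, i.e.\ $\overline{u}\not\equiv\overline{v}$ fails to be consistent already at the atomic level; but then $\overline{u}\not\equiv\overline{v}$ and, since the configuration has diameter at most $2^m$, Corollary~\ref{corl:metric} gives Spoiler a win in the $m$-round $3$-pebble game, as required (here $m+n=m$).

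\medskip

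\noindent\textbf{Induction step.} Assume the statement for $n-1$ and suppose Spoiler wins the $n$-round $2^m$-local game from a $3$-configuration $(\overline{u},\overline{v})$ of diameter at most $2^m$. Consider Spoiler's first move in the $2^m$-local game: he picks a structure, say $\mathbf{A}$, and places (or re-places) a pebble, producing a configuration $(\overline{u}',\overline{v}')$ that Duplicator must match by some $\overline{v}'$ in $\mathbf{B}$ (respectively $\overline{u}'$ if Spoiler moved in $\mathbf{B}$); by the local constraint $\overline{u}'$ has diameter at most $2^m$. If $\overline{u}'\not\equiv\overline{v}'$ then $\overline{u}'$ has diameter at most $2^m$ and Corollary~\ref{corl:metric} finishes the $3$-pebble game within $m\le m+n-1$ further rounds. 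Otherwise $\overline{u}'\equiv\overline{v}'$, so $(\overline{u}',\overline{v}')$ is a legal position of the $2^m$-local game from which Spoiler wins in $n-1$ rounds (and it still has diameter at most $2^m$), so by the induction hypothesis Spoiler wins the $(m+n-1)$-round $3$-pebble game from $(\overline{u}',\overline{v}')$. Prepending Spoiler's first move gives a win in $m+n$ rounds in the $3$-pebble game.

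\medskip

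\noindent\textbf{The main obstacle.} The subtlety glossed over above is the pebble-count bookkeeping: in the $2^m$-local game on a $3$-configuration Spoiler \emph{already has} three pebbles down, and he may reuse any of them, so his moves are exactly moves available in the $3$-pebble game; the only real content is to check that whenever the local game's partial-isomorphism/diameter invariant could be violated, it is violated by a \emph{metric} discrepancy bounded by $2^m$, which is precisely what Corollary~\ref{corl:metric} needs --- the order-type discrepancies are subsumed since indistinguishability by difference constraints refines the order. I also need to record that the diameter-$2^m$ hypothesis genuinely propagates: Duplicator is forced to keep her configuration within diameter $2^m$ by the local-game rules, so every intermediate $3$-configuration to which we apply the induction hypothesis or Corollary~\ref{corl:metric} satisfies the diameter bound on at least one side (in fact on both). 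Packaging these observations carefully is the crux; the arithmetic is routine once the invariant is stated correctly.
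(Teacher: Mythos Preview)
Your argument has a genuine gap. The $2^m$-local game is \emph{not} a $3$-pebble game: by the paper's convention, a ``$D$-local game'' restricts only the diameter, not the number of pebbles. So when you write ``in the $2^m$-local game on a $3$-configuration Spoiler already has three pebbles down, and he may reuse any of them, so his moves are exactly moves available in the $3$-pebble game,'' this is simply false --- in the local game Spoiler is free to introduce a fourth, fifth, etc.\ pebble. Consequently, after one local-game round the configuration $(\overline{u}',\overline{v}')$ may have four pebbles on each side, and your induction hypothesis (stated only for $3$-configurations) no longer applies. The whole content of the proposition is precisely that unbounded-pebble power in the local game can be simulated with three pebbles; your induction step assumes this away.

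The paper's fix is to strengthen the induction hypothesis substantially. Having disposed of the case $\overline{u}\not\equiv\overline{v}$ via Corollary~\ref{corl:metric}, it assumes $\overline{u}\equiv\overline{v}$, arranges both in increasing order of fractional parts, and proves: if Spoiler wins the $n$-round $2^m$-local game from an $s$-configuration $(u_1\ldots u_s,v_1\ldots v_s)$ for \emph{any} $s$, then he wins the $(m+n)$-round $3$-pebble game from some $2$-sub-configuration $(u_iu_{i+1},v_iv_{i+1})$ of fractionally adjacent pebbles (or the wrap-around pair $(u_su_1,v_sv_1)$). The point is that although the local-game configuration grows with each round, Spoiler in the $3$-pebble game only needs to track one adjacent pair, leaving his third pebble free to play the next local-game move. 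In the induction step, after Spoiler's move $u'$ and a hypothetical Duplicator response $v'$ chosen so that $(u_1u_su',v_1v_sv')$ is Duplicator-winning in the $(m+n-1)$-round $3$-pebble game, the paper invokes Proposition~\ref{prop:composition} to deduce $u_1\ldots u_su'\equiv v_1\ldots v_sv'$, then applies the strengthened hypothesis to the enlarged configuration; the resulting winning $2$-sub-configuration cannot involve $u'$ (since $(u_1u_su',v_1v_sv')$ was assumed Duplicator-winning), so it lies among the original pebbles. This compositional step via Proposition~\ref{prop:composition} is the heart of the argument and is entirely absent from your proposal.

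A smaller point: in your base case you identify ``not a partial isomorphism'' with $\overline{u}\not\equiv\overline{v}$, but the failure could instead be a unary-predicate mismatch at some index $i$, in which case Corollary~\ref{corl:metric} is irrelevant and Spoiler wins in $0$ rounds directly. The paper separates these two cases at the outset.
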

\begin{proof}
If $\overline{u}\not\equiv\overline{v}$ then the result follows from
Corollary~\ref{corl:metric}.  Thus it suffices to prove the
proposition under the assumption $\overline{u}\equiv\overline{v}$.

Without loss of generality assume that $\overline{u}$ and
$\overline{v}$ are both increasing.  The proof is by induction on $n$,
with the following induction hypothesis.

\emph{Induction Hypothesis:} Let assignments $u_1\ldots u_s \equiv
v_1\ldots v_s$ be increasing and have diameter at most $2^m$.  If
Spoiler wins the $n$-round $2^m$-local game from
$(\overline{u},\overline{v})$, then he wins the $(m+n)$-round 3-pebble
game from a 2-configuration of the form $(u_iu_{i+1},v_iv_{i+1})$,
$1\leq i \leq s-1$, or $(u_su_1,v_sv_1)$.

\emph{Base case $(n=0)$.} By assumption $(\overline{u},\overline{v})$
is immediately winning for Spoiler in the local game.  Since
$\overline{u} \equiv \overline{v}$, $u_i$ and $v_i$ must disagree on a
unary predicate for some index $i$.  Then $(u_i,v_i)$ is immediately
winning for Spoiler in the 3-pebble game.  Clearly the position
remains immediately winning for Spoiler if we add an extra pebble to
each assignment.  Thus the base case of the induction is established.

\emph{Induction step $(n\geq 1)$.}  Pick a Spoiler move according to
his winning strategy in the local game in configuration
$(\overline{u},\overline{v})$.  Without loss of generality, assume
that this move, say $u'$, is in structure $\mathbf{A}$.  Since any
cyclic permutation of an increasing configuration is also increasing,
we may assume without loss of generality that $u_1\ldots u_su'$ is
increasing.

If $(u_1u_s,v_1v_s)$ is winning for Spoiler in the $(m+n)$-round
3-pebble game then we are done, so suppose that this is not the case.
Then there exists a Duplicator move $v'$ such that
$(u_1u_su',v_1v_sv')$ is winning for Duplicator in the $(m+n-1)$-round
3-pebble game.  Since $\mathrm{diam}(u_1u_su') \leq 2^m$, by
Corollary~\ref{corl:metric} we must have $u_1u_su' \equiv v_1v_sv'$.
It follows that $v_1\ldots v_sv'$ is increasing.

Since $u_1\ldots u_su'$ and $v_1\ldots v_sv'$ are increasing,
$u_1\ldots u_s\equiv v_1\ldots v_s$, and 
$u_su' \equiv v_sv'$, by Proposition~\ref{prop:composition} we have
\begin{gather}
u_1\ldots u_su' \equiv 
v_1\ldots v_sv' \, .
\label{eq:extend}
\end{gather}

Since the pair of assignments in (\ref{eq:extend}) is winning for
Spoiler in the $(n-1)$-round local game, by the induction hypothesis
there exists a sub-configuration (comprising two consecutive pebbles
in each assignment) from which Spoiler wins the $(m+n-1)$-round
3-pebble game.  This 2-configuration cannot be $(u_su',v_sv')$ nor
$(u_1u',v_1v')$, since $(u_1u_su',v_1v_sv')$ is winning for Duplicator
in the $(m+n-1)$-round 3-pebble game.  Thus Spoiler must win the
$(m+n-1)$-round 3-pebble game from a 2-configuration
$(u_iu_{i+1},v_iv_{i+1})$ for some $i\in \{1,\ldots,s-1\}$.  \emph{A
  fortiori} Spoiler also wins the $(m+n)$-round 3-pebble game from
this configuration.  \qed
\end{proof}

\section{Main Results}
\label{sec:main}
\subsection{Composition Lemma}
In this section we consider an Ehrenfeucht-Fra\"{i}ss\'{e} game on two
signals $\mathbf{A}$ and $\mathbf{B}$.  We will prove a Composition
Lemma that allows us to compose winning Duplicator strategies under
certain assumptions.  From this we obtain our main result, that
monadic first-order logic over signals has the 3-variable property.

Assume assignments $\overline{u}=u_1\ldots u_s$ in $\mathbf{A}$ and
$\overline{v} = v_1 \ldots v_s$ in $\mathbf{B}$ with $u_1<\ldots<u_s$
and $v_1 < \ldots < v_s$.  The Composition Lemma is predicated on a
decomposition of $\overline{u}$ into a \emph{left part}
$\overline{u}_{\triangleleft}=u_1\ldots u_l$, \emph{middle part}
$\overline{u}_{\diamond}=u_l\ldots u_r$, and \emph{right part}
$\overline{u}_{\triangleright}=u_r\ldots u_s$, where $1 \leq l \leq r
\leq s$.  We call $u_l$ the \emph{left boundary} and $u_r$ the
\emph{right boundary}.  The \emph{left margin} is defined to be
$\mathit{margin}(\overline{u}_{\triangleleft})=u_l-u_{l-1}$, where
$u_0=-\infty$ by convention.  Likewise the \emph{right margin} is
defined to be
$\mathit{margin}(\overline{u}_{\triangleright})=u_{r+1}-u_r$, where
$u_{s+1}=\infty$ by convention.  We consider a corresponding
decomposition of $\overline{v}$ into
$\overline{v}_{\triangleleft}=v_1\ldots v_l$,
$\overline{v}_{\diamond}=v_l\ldots v_r$, and
$\overline{v}_{\triangleright}=v_r\ldots v_s$, for the same values of
$l$ and $r$.

The Composition Lemma gives conditions under which we can obtain a
winning strategy for Duplicator in a configuration
$(\overline{u},\overline{v})$ by composing winning Duplicator
strategies in the \emph{left configuration}
$(\overline{u}_{\triangleleft},\overline{v}_{\triangleleft})$, the
\emph{middle configuration}
$(\overline{u}_{\diamond},\overline{v}_{\diamond})$, and \emph{right
  configuration}
$(\overline{u}_{\triangleright},\overline{v}_{\triangleright})$, see
Figure~\ref{fig:comp}.  The main idea behind the proof is to maintain
adequate separation between pebbles played by the left and middle
Duplicator strategies, and likewise between pebbles played by the
middle and right strategies.  We do this by maintaining the left and
right margins appropriately.  Importantly for later use, we need only
assume that Duplicator has a local winning strategy in the middle
configuration.
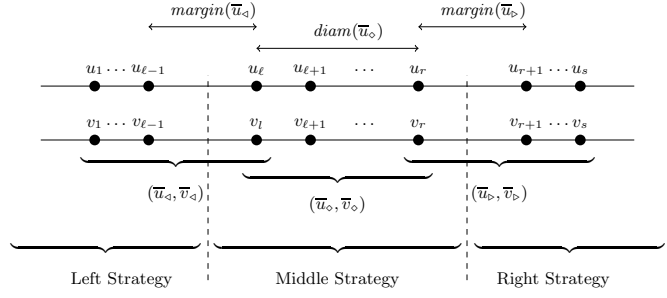
\begin{figure}[t]
			\centering
			\resizebox{9.0cm}{!}{
		  		\begin{tikzpicture}
					\draw[] (0,9) -- (11,9);
	
					\node [circle,fill=black,draw=none,inner sep=2pt] (x1) at (1,9) {};
					\node [draw=none,inner sep=1pt] () at (1,9.3) {$u_1$};
	
					\node [draw=none,inner sep=1pt] () at (1.4,9.3) {$\ldots$};
					
					\node [circle,fill=black,draw=none,inner sep=2pt] (x2) at (2,9) {};
					\node [draw=none,inner sep=1pt] () at (2,9.3) {$u_{\ell-1}$};
	
					\node [circle,fill=black,draw=none,inner sep=2pt] (x3) at (4,9) {};
					\node [draw=none,inner sep=1pt] () at (4,9.3) {$u_{\ell}$};
					
					\node [circle,fill=black,draw=none,inner sep=2pt] (x3) at (5,9) {};
					\node [draw=none,inner sep=1pt] () at (5,9.3) {$u_{\ell+1}$};
					
					\node [draw=none,inner sep=1pt] () at (6,9.3) {$\ldots$};
	
					\node [circle,fill=black,draw=none,inner sep=2pt] (x2) at (7,9) {};
					\node [draw=none,inner sep=1pt] () at (7,9.3) {$u_r$};
					
					\node [circle,fill=black,draw=none,inner sep=2pt] (x2) at (9,9) {};
					\node [draw=none,inner sep=1pt] () at (9,9.3) {$u_{r+1}$};
	
					\node [draw=none,inner sep=1pt] () at (9.6,9.3) {$\ldots$};
					
					\node [circle,fill=black,draw=none,inner sep=2pt] (x3) at (10,9) {};
					\node [draw=none,inner sep=1pt] () at (10,9.3) {$u_s$};

					\draw[] (0,8) -- (11,8);
	
					\node [circle,fill=black,draw=none,inner sep=2pt] (x1) at (1,8) {};
					\node [draw=none,inner sep=1pt] () at (1,8.3) {$v_1$};
	
					\node [draw=none,inner sep=1pt] () at (1.4,8.3) {$\ldots$};
					
					\node [circle,fill=black,draw=none,inner sep=2pt] (x2) at (2,8) {};
					\node [draw=none,inner sep=1pt] () at (2,8.3) {$v_{\ell-1}$};
	
					\node [circle,fill=black,draw=none,inner sep=2pt] (x3) at (4,8) {};
					\node [draw=none,inner sep=1pt] () at (4,8.3) {$v_l$};
					
					\node [circle,fill=black,draw=none,inner sep=2pt] (x3) at (5,8) {};
					\node [draw=none,inner sep=1pt] () at (5,8.3) {$v_{\ell+1}$};
					
					\node [draw=none,inner sep=1pt] () at (6,8.3) {$\ldots$};
	
					\node [circle,fill=black,draw=none,inner sep=2pt] (x2) at (7,8) {};
					\node [draw=none,inner sep=1pt] () at (7,8.3) {$v_r$};
					
					\node [circle,fill=black,draw=none,inner sep=2pt] (x2) at (9,8) {};
					\node [draw=none,inner sep=1pt] () at (9,8.3) {$v_{r+1}$};
	
					\node [draw=none,inner sep=1pt] () at (9.6,8.3) {$\ldots$};
					
					\node [circle,fill=black,draw=none,inner sep=2pt] (x3) at (10,8) {};
					\node [draw=none,inner sep=1pt] () at (10,8.3) {$v_s$};
					
					\node [draw=none] () at (8.5,7.6) {$\underbrace{\hspace{100pt}}$};
					\node [draw=none] () at (5.5,7.3) {$\underbrace{\hspace{100pt}}$};
					\node [draw=none] () at (2.5,7.6) {$\underbrace{\hspace{100pt}}$};
					
					\node [draw=none] () at (8.5,7.0) {$(\overline{u}_{\triangleright},\overline{v}_{\triangleright})$};
					\node [draw=none] () at (5.5,6.8) {$(\overline{u}_{\diamond},\overline{v}_{\diamond})$};
					\node [draw=none] () at (2.5,7.0) {$(\overline{u}_{\triangleleft},\overline{v}_{\triangleleft})$};

					\node [draw=none] () at (3.2,10.4) {$\mathit{margin}(\overline{u}_{\triangleleft})$};
					\node [draw=none] () at (5.7,10) {$\mathit{diam}(\overline{u}_{\diamond})$};
					\node [draw=none] () at (8.2,10.4) {$\mathit{margin}(\overline{u}_{\triangleright})$};
					
					\draw[<->] (2,10.1) -- (4,10.1);
					\draw[<->] (7,10.1) -- (9,10.1);
					\draw[<->] (4,9.7) -- (7,9.7);
					
					\node [draw=none] () at (9.8,6) {$\underbrace{\hspace{100pt}}$};
					\node [draw=none] () at (5.5,6) {$\underbrace{\hspace{130pt}}$};
					\node [draw=none] () at (1.2,6) {$\underbrace{\hspace{100pt}}$};
					
					\node [draw=none] () at (9.5,5.4) {Right Strategy};
					\node [draw=none] () at (5.5,5.4) {Middle Strategy};
					\node [draw=none] () at (1.5,5.4) {Left Strategy};
					
					\draw[dashed] (3.1,5.4) -- (3.1,9.2);
					\draw[dashed] (7.9,5.4) -- (7.9,9.2);
				\end{tikzpicture}
			}
\caption{Situation of the Composition Lemma}
\label{fig:comp}
		\end{figure}

\begin{lemma}[Composition Lemma]
Suppose that Duplicator wins the $n$-round games from configurations
$(\overline{u}_{\triangleleft},\overline{v}_{\triangleleft})$ and
$(\overline{u}_{\triangleright},\overline{v}_{\triangleright})$
respectively, and let $D$ be such that Duplicator wins the $3n$-round
$D$-local game from configuration
$(\overline{u}_{\diamond},\overline{v}_{\diamond})$.  If
$\mathit{margin}(\overline{u}_{\triangleleft}) >
2^n$,
$\mathit{margin}(\overline{u}_\triangleright) >
2^n$, $D \geq \mathit{diam}(\overline{u}_\diamond)+2^{n+1}$, and the
corresponding three conditions also hold for $\overline{v}$, then
Duplicator wins the $n$-round game from configuration
$(\overline{u},\overline{v})$.
\label{lem:composition}
 \end{lemma}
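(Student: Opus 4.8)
The plan is to adapt the classical sum-of-linear-orders composition to cope with the $+1$ relation and with only a local strategy in the middle. I would play the $n$-round game on $(\overline{u},\overline{v})$ by running three sub-games in parallel --- on $(\overline{u}_{\triangleleft},\overline{v}_{\triangleleft})$, on $(\overline{u}_{\diamond},\overline{v}_{\diamond})$, and on $(\overline{u}_{\triangleright},\overline{v}_{\triangleright})$ --- and by partitioning the line, in each structure, into a left, a middle and a right region. The middle region would be a slightly enlarged copy of $[u_l,u_r]$ (respectively $[v_l,v_r]$), padded with a buffer of width at most $2^{n}$ on each side; the left and right regions are what remains. The invariant to maintain is that the restriction of the current configuration to each region is exactly the current configuration of the corresponding sub-game, that each sub-game is still winning for Duplicator for the number of rounds remaining (the left and right ones in the ordinary game, the middle one in the $D$-local game), and a shrinking margin condition: after $i$ rounds the residual left and right margins still exceed $2^{n-i}$.

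When Spoiler plays strictly inside one of the three regions, I would answer with that region's sub-strategy. The hypothesis $D\ge\mathit{diam}(\overline{u}_{\diamond})+2^{n+1}$ (and its counterpart for $\overline{v}$) is precisely what makes this legitimate for the middle: even after the middle sub-game absorbs buffer pebbles on both sides, its configuration still has diameter at most $D$, so its $D$-local strategy never stalls. The hypotheses $\mathit{margin}(\overline{u}_{\triangleleft})>2^{n}$ and $\mathit{margin}(\overline{u}_{\triangleright})>2^{n}$, together with the matching conditions for $\overline{v}$, give the margin invariant at the outset and let it survive the geometric shrinking; they ensure that a pebble played deep inside the left or right region is more than $2^{n-i}\ge 1$ from every pebble currently owned by the middle sub-game, so no $f$-atom --- no pair of pebbles exactly one unit apart --- can straddle that boundary unnoticed. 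To see that a sub-strategy's reply near a boundary again lands near the image of that boundary, keeping the two region structures in step, I would invoke Corollary~\ref{corl:metric}: a strategy winning for sufficiently many further rounds forces the two sides to agree on all difference constraints between pebbles lying within the relevant bounded distance.

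The real work is at the boundaries. A move inside a buffer near $u_l$ I would route to the middle sub-game, whose reply lands inside the buffer around $v_l$ and is consistent with $v_l,v_{l+1},\ldots$ because those pebbles are among its own. What still has to be reconciled is a buffer pebble against pebbles that a neighbouring sub-game placed earlier on the far side of the boundary: for this I would echo the move into that neighbouring sub-game and, if its reply again falls near the boundary, echo it back --- a shuttle that terminates after boundedly many steps since the offsets involved are bounded by $2^{n}$ and strictly decrease. The direct move together with one echo at each of the two boundaries is what forces the middle sub-game to be run for up to $3n$ rounds, matching the hypothesis. I expect this bookkeeping to be the main obstacle: laying out the buffers, the shrinking margins and the boundary shuttle so that every atomic relation straddling a region boundary --- in particular every instance of $x=f(y)$ --- is controlled by some sub-game containing both its endpoints, while keeping all three sub-games inside their round and locality budgets.

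Once the invariant survives $n$ rounds, the final configuration $(\overline{u}',\overline{v}')$ restricts to a partial isomorphism on each region, the order and unary-predicate atoms between regions are immediate, and the buffers exclude any cross-region instance of $x=f(y)$ beyond those already handled inside a sub-game; hence $(\overline{u}',\overline{v}')$ is a partial isomorphism and Duplicator wins the $n$-round game from $(\overline{u},\overline{v})$, as required.
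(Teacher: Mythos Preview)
Your architecture matches the paper's: a three-way split, sub-strategies run in parallel, margins that shrink geometrically to $2^{n-i}$, the middle sub-game absorbing up to three moves per round of the main game (whence the $3n$ budget), and appeals to Proposition~\ref{prop:metric}/Corollary~\ref{corl:metric} to force replies into the correct region. The one place you overcomplicate is the boundary case. The paper runs no shuttle between neighbouring sub-games. Instead it proceeds by a direct induction on $n$: when Spoiler's move $u'$ falls in the buffer $u_l-2^{n-1}\le u'\le u_r+2^{n-1}$, Duplicator answers $u'$ using the middle strategy and then, still in the middle sub-game, plays the two \emph{fixed} pebbles $u'_l:=u_l-2^{n-1}$ and $u'_r:=u_r+2^{n-1}$, whose responses are forced (again by Proposition~\ref{prop:metric}) to be $v'_l=v_l-2^{n-1}$ and $v'_r=v_r+2^{n-1}$. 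These become the new boundaries: the left configuration is re-set to $(u_1\ldots u_{l-1}u'_l,\,v_1\ldots v_{l-1}v'_l)$, the right symmetrically, and the induction hypothesis is applied to this shifted decomposition. One re-decomposition per round, no echoing, no termination argument.

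Your shuttle, as stated (``offsets \ldots\ strictly decrease''), is not obviously well-founded: nothing in the setup guarantees that a neighbouring sub-game's reply to an echoed boundary move lands strictly farther from the boundary than the original, so the back-and-forth need not terminate in the bounded number of steps you claim. The paper's fixed-pebble re-decomposition sidesteps this entirely and is what makes the $3n$ bound exact rather than merely plausible.
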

\begin{proof}
We show that configuration $(\overline{u},\overline{v})$ is winning
for Duplicator in the $n$-round game.  The proof is by induction on
$n$.  

\emph{Base case $(n=0)$.} Note that $(\overline{u},\overline{v})$ is a
partial isomorphism since
$(\overline{u}_{\triangleleft},\overline{v}_{\triangleleft})$,
$(\overline{u}_{\diamond},\overline{v}_{\diamond})$, and
$(\overline{u}_{\triangleright},\overline{v}_{\triangleright})$ are
all partial isomorphisms,
$\mathit{margin}(\overline{u}_{\triangleleft})$
and
$\mathit{margin}(\overline{v}_{\triangleleft})$
are both greater than one, and likewise for 
$\mathit{margin}(\overline{u}_{\triangleright})$
and
$\mathit{margin}(\overline{v}_\triangleright)$.

Induction step $(n>0)$.  Without loss of generality assume that Spoiler 
plays a move $u'$ in structure $\mathbf{A}$.  We consider three cases.

\emph{Case (i).}  Suppose that $u' <u_l- 2^{n-1}$.  Then Duplicator's
winning strategy in configuration
$(\overline{u}_{\triangleleft},\overline{v}_{\triangleleft})$ yields a
response $v'$ such that
$(\overline{u}_{\triangleleft}u',\overline{v}_{\triangleleft}v')$ is
winning for Duplicator in the $(n-1)$-round game.  In particular,
applying Proposition~\ref{prop:metric}, we have $v' < v_l - 2^{n-1}$.
Applying the induction hypothesis to
$(\overline{u}_{\triangleleft}u',\overline{v}_{\triangleleft}v')$,
$(\overline{u}_{\diamond},\overline{v}_{\diamond})$, and
$(\overline{u}_{\triangleright},\overline{v}_{\triangleright})$ we get
that $(\overline{u}u',\overline{v}v')$ is winning for Duplicator in
the $(n-1)$-round game.

\emph{Case (ii).} Suppose that $u' > u_r+2^{n-1}$.  This case is
entirely analogous to Case (i), except that Duplicator's 
response to $u'$ is generated from her winning strategy in configuration
$(\overline{u}_{\triangleright},\overline{v}_{\triangleright})$.

\emph{Case (iii).} Suppose that $u_l-2^{n-1} \leq u' \leq
u_r+2^{n-1}$.  Then Duplicator's winning strategy in configuration
$(\overline{u}_{\diamond},\overline{v}_{\diamond})$ yields a response
$v'$ such that Duplicator wins the $(3n-1)$-round $D$-local game from
$(\overline{u}_{\diamond}u',\overline{v}_{\diamond}v')$.  By
Proposition~\ref{prop:metric} we must have $v_l-2^{n-1}\leq v' \leq
v_r+2^{n-1}$.

To apply the induction hypothesis, the idea is to ``expand the middle
configuration'' by adding new left and right boundary pebbles
$u'_l,u'_r$ and $v'_l,v'_r$ respectively.  Formally, Spoiler moves
$u'_l:=u_l-2^{n-1}$ and $u'_r:=u_r+2^{n-1}$ in the $D$-local game in
position $(\overline{u}_{\diamond}u',\overline{v}_{\diamond}v')$ force
Duplicator responses $v'_l:=v_l-2^{n-1}$ and $v'_r:=v_r+2^{n-1}$ such
that
$(u'_l\overline{u}_{\diamond}u'u'_r,v'_l\overline{v}_{\diamond}v'v'_r)$
is winning for Duplicator in the $3(n-1)$-round $D$-local game.  By
the same reasoning,
$(\overline{u}_{\triangleleft}u'_l,\overline{v}_{\triangleleft}v'_l)$
and
$(u'_r\overline{u}_{\triangleright},v'_r\overline{v}_{\triangleright})$
are both winning positions for Duplicator in the $(n-1)$-round game.
\emph{A fortiori} $(u_1\ldots u_{l-1}u'_l,v_1\ldots v_{l-1}v'_l)$ and
$(u'_ru_{r+1}\ldots u_s,v'_rv_{r+1}\ldots v_s)$ are also both winning
for Duplicator in the $(n-1)$-round game.  Finally, applying the
induction hypothesis with left configuration $(u_1\ldots
u_{l-1}u'_l,v_1\ldots v_{l-1}v'_l)$, middle configuration
$(u'_l\overline{u}_{\diamond}u'u'_r,v'_l\overline{v}_{\diamond}v'v'_r)$,
and right configuration $(u'_ru_{r+1}\ldots u_s,v'_rv_{r+1}\ldots
v_s)$, we conclude that $(\overline{u}u',\overline{v}v')$ is winning
for Duplicator in $n-1$ rounds.  \qed
\end{proof}

\subsection{3-Variable Theorem}
\begin{proposition}
Suppose that Duplicator wins the $(4n+2)$-round 3-pebble game from a
configuration $(\overline{u},\overline{v})$ with
$|\overline{u}|=|\overline{v}| \leq 3$.  Then she also wins the
$n$-round (unrestricted-pebble) game from configuration
$(\overline{u},\overline{v})$.
\label{prop:3-var}
\end{proposition}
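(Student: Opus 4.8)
The plan is to prove Proposition~\ref{prop:3-var} by induction on $n$, using the Composition Lemma (Lemma~\ref{lem:composition}) together with Proposition~\ref{prop:local} and Corollary~\ref{corl:metric}. The base case $n=0$ is immediate: a $2$-round $3$-pebble win in particular means that $(\overline{u},\overline{v})$ is a partial isomorphism, so Duplicator wins the $0$-round unrestricted game.

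For the inductive step, suppose Spoiler plays a move; without loss of generality it is a point $u'$ in $\mathbf{A}$. It suffices to exhibit a response $v'$ such that $(\overline{u}u',\overline{v}v')$ is winning for Duplicator in the $(n-1)$-round unrestricted game. Since this configuration has up to four pebbles we cannot invoke the induction hypothesis directly, and the role of the Composition Lemma is to decompose it into pieces of at most three pebbles. First I would invoke Corollary~\ref{corl:metric} to argue that $\overline{u}$ and $\overline{v}$ have the same coarse metric structure at all integer scales up to $2^{O(n)}$, so that a decomposition chosen in $\mathbf{A}$ transports to $\mathbf{B}$. Then I would classify $u'$ by its distance to the existing pebbles, using a threshold of order $2^{n}$. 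If $u'$ lies close to the cluster $C$ of existing pebbles nearest to it (and far from the rest), Duplicator picks $v'$ either by playing her $3$-pebble strategy on the sub-configuration $C\cup\{u'\}$ — using that Duplicator wins from sub-configurations — when $C$ has at most two pebbles, or by playing the \emph{local} strategy furnished by Proposition~\ref{prop:local} applied to $C$ when $C$ already has three pebbles. One then applies the Composition Lemma at parameter $n-1$: the middle part is the small-diameter cluster containing $u'$, for which Proposition~\ref{prop:local} supplies a $3(n-1)$-round $D$-local winning strategy with $D\geq\mathrm{diam}+2^{n}$; the left and right parts consist of the remaining, far-away pebbles, and being sub-configurations of at most three pebbles of $(\overline{u},\overline{v})$ they win the $(4(n-1)+2)$-round $3$-pebble game, so the induction hypothesis yields the required $(n-1)$-round strategies for them; Corollary~\ref{corl:metric} guarantees the left and right margins in $\overline{v}$ exceed $2^{n-1}$, matching those in $\overline{u}$.

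The round budget $4n+2$ is precisely what makes the arithmetic close: after charging one round for Spoiler's move, the middle still needs a $3(n-1)$-round local win, and Proposition~\ref{prop:local} turns this into a $3$-pebble requirement of roughly $4n-1$ rounds at the relevant spatial scale, comfortably covered by the $(4n+1)$-round win inherited after Spoiler's move; the small amount of slack is what lets the distance thresholds be chosen consistently for both structures.

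The step I expect to be the main obstacle is the transition from three pebbles to four, and in particular the case where all three old pebbles together with $u'$ lie inside one small cluster: there is then no large gap at which to split, so the ``middle'' of the Composition Lemma must contain all four pebbles, and the only way to endow it with a local winning strategy is via Proposition~\ref{prop:local} applied to the \emph{old} three-pebble cluster, so that Duplicator's response $v'$ is dictated by that local strategy rather than by a $3$-pebble strategy. Verifying that the resulting four-pebble configuration, together with each of its two-pebble sub-configurations, still carries enough of a $3$-pebble win to feed back into the induction hypothesis — while simultaneously checking the quantitative hypotheses of the Composition Lemma (margins greater than $2^{n-1}$ on both sides and $D\geq\mathrm{diam}+2^{n}$) in both $\mathbf{A}$ and $\mathbf{B}$ — is where the bookkeeping is concentrated.
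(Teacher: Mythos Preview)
Your overall architecture is right—induction on $n$, with the Composition Lemma, Proposition~\ref{prop:local}, and Corollary~\ref{corl:metric} as the ingredients—but you have inverted the order of the two key steps, and this is what creates the obstacle you yourself flag at the end.

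You let Spoiler move first and then try to decompose the resulting configuration, which may have four pebbles.  The paper does the opposite: it decomposes the \emph{original} $3$-configuration $(\overline{u},\overline{v})$ \emph{before} any move is played, and then invokes the Composition Lemma at level $n$ directly.  Because there are only three pebbles, a short case analysis on the two gaps $u_2-u_1$ and $u_3-u_2$ (threshold $2^n$) always produces a decomposition with $|\overline{u}_{\triangleleft}|\le 2$, $|\overline{u}_{\triangleright}|\le 2$, $\mathrm{diam}(\overline{u}_\diamond)\le 2^{n+1}$, and margins $>2^n$.  The left and right pieces, having at most two pebbles, fall under the easy ``$|\overline{u}|<3$'' case of the induction (one Spoiler move brings them to size $\le 3$, Duplicator answers with her $3$-pebble strategy, and the induction hypothesis applies).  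The middle piece gets its $3n$-round $2^{n+2}$-local strategy straight from Proposition~\ref{prop:local}, since $(4n+2)-(n+2)=3n$.  The Composition Lemma then yields the $n$-round unrestricted win for the whole configuration without ever forming a four-pebble object.

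Your route is not obviously wrong, but the difficulty you anticipate is real and you have not discharged it.  In the ``all four pebbles in one cluster'' case you pick $v'$ using the \emph{local} strategy coming from Proposition~\ref{prop:local}; the boundary singleton of your decomposition may then be $(u',v')$, and to feed the Composition Lemma you need Duplicator to win the $(n-1)$-round \emph{unrestricted} game from $(u',v')$.  That does not follow from the local win, nor from the induction hypothesis, since $v'$ was not produced by a $3$-pebble strategy on any sub-configuration of $(\overline{u},\overline{v})$.  The paper's ``decompose first'' ordering sidesteps this entirely: the left and right parts are always sub-configurations of the original three pebbles, so they automatically inherit the $(4n+2)$-round $3$-pebble win.
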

\begin{proof}

	
The proof is by induction on $n$.  The base case $(n=0)$ is immediate,
and the induction step $(n>0)$ is as follows. Suppose that
$|\overline{u}|=|\overline{v}|<3$. Then for any Spoiler move,
Duplicator replies using her 3-pebble strategy, leading to a
3-configuration $(\overline{u}',\overline{v}')$. Duplicator now has a
winning strategy for the $(4n+1)$-round 3-pebble game starting from
the configuration $(\overline{u}',\overline{v}')$, and therefore she
also has a winning strategy for the $(4(n-1)+2)$-round 3-pebble game
from $(\overline{u}',\overline{v}')$. By the induction hypothesis, she
has a winning strategy for the $(n-1)$-round unrestricted game from
$(\overline{u}',\overline{v}')$, and therefore a winning strategy for
the $n$-round game from $(\overline{u},\overline{v})$.

Now suppose that $|\overline{u}|=|\overline{v}|=3$. 
We claim that given any 3-configuration $(\overline{u},\overline{v})$,
we can decompose it into a left part
$(\overline{u}_{\triangleleft},\overline{v}_{\triangleleft})$, a
middle part $(\overline{u}_{\diamond},\overline{v}_{\diamond})$ and a
right part
$(\overline{u}_{\triangleright},\overline{v}_{\triangleright})$,
satisfying the following desiderata:
\begin{enumerate}
	\item $\mathit{diam}(\overline{u}_{\diamond})\leq 2^{n+1}$,
	\item $\mathit{margin}(\overline{u}_{\triangleleft})>2^n$ and $\mathit{margin}(\overline{u}_{\triangleright})>2^n$,
	\item $|\overline{u}_{\triangleleft}| \leq 2$ and $|\overline{u}_{\triangleright}| \leq2$,
	\item Conditions 1--3 hold for $\overline{v}_{\triangleleft}$, $\overline{v}_{\diamond}$, and $\overline{v}_{\triangleright}$.
\end{enumerate}

By Proposition \ref{prop:local}, if the above four conditions hold, we
obtain that Duplicator has a winning strategy for the $3n$-round
$2^{n+2}$-local game from the configuration
$(\overline{u}_{\diamond},\overline{v}_{\diamond})$. Furthermore, by
(3) and the case described above for configurations of size strictly
less than 3, it follows that Duplicator has a winning strategy for the
$n$-round games from the configurations
$(\overline{u}_{\triangleleft},\overline{v}_{\triangleleft})$ and
$(\overline{u}_{\triangleright},\overline{v}_{\triangleright})$.
Thus, by applying the Composition Lemma \ref{lem:composition},
Duplicator has a winning strategy for the $n$-round game from the
configuration $(\overline{u},\overline{v})$.

It remains to show that given any 3-configuration
$(\overline{u},\overline{v})$, we can always find a decomposition that
satisfies the above conditions.  We show this by the following case
analysis.  Without loss of generality, assume that $u_1\leq u_2 \leq u_3$ and
$v_1 \leq v_2 \leq v_3$.

\emph{Case(i).}  Suppose that $u_2-u_1\leq 2^n$ and $u_3-u_2\leq 2^n$. Then it
is also the case that $v_2-v_1\leq 2^n$ and $v_3-v_2\leq 2^n$, since otherwise
Spoiler would have a $n$-round 3-pebble winning strategy by the contraposition of Corollary
\ref{corl:metric}. Then let $\overline{u}_{\triangleleft}=u_1$,
$\overline{u}_{\triangleright}=u_3$ and
$\overline{u}_{\diamond}=u_1u_2u_3$, and assume a corresponding
decomposition of $\overline{v}$.

\emph{Case(ii).} Suppose that $u_3-u_2>2^n$ and $u_2-u_1>2^n$. Then it
is also the case that $v_3-v_2>2^n$ and $v_2-v_1>2^n$ by Corollary
\ref{corl:metric}. Let then $\overline{u}_{\diamond}=u_2$,
$\overline{u}_{\triangleleft}=u_1u_2$,
$\overline{u}_{\triangleright}=u_2u_3$, and consider the corresponding
decomposition for $\overline{v}$.

\emph{Case(iii).} Suppose finally that $u_3-u_2>2^n$ and
$u_2-u_1\leq 2^n$. By Corollary \ref{corl:metric}, we also have that
$v_3-v_2>2^n$ and $v_2-v_1\leq 2^n$. Let
$\overline{u}_{\triangleleft}=u_1$, $\overline{u}_{\diamond}=u_1u_2$,
$\overline{u}_{\triangleright}=u_2u_3$ and consider the corresponding
decomposition of $\overline{v}$.

The case where $u_3-u_2\leq 2^n$ and $u_2-u_1>2^n$ is symmetric.
\qed
\end{proof}

From Proposition~\ref{prop:3-var} and Theorem~\ref{thm:EF} we
immediately obtain our main result:

\begin{theorem}
$(\mathbb{R},<,+1)$ has the 3-variable property.
\label{thm:main}
\end{theorem}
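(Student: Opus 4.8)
The plan is to derive Theorem~\ref{thm:main} immediately from Proposition~\ref{prop:3-var} and the Immerman--Kozen characterisation recorded in Theorem~\ref{thm:EF}, applied with $k=3$ over the class $\mathcal{C}$ of signals (i.e.\ all labelled expansions of $(\mathbb{R},<,+1)$).

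First I would pass to the contrapositive of Proposition~\ref{prop:3-var}: for every $n$, whenever Spoiler wins the $n$-round unrestricted-pebble Ehrenfeucht-Fra\"{i}ss\'{e} game from a configuration $(\overline{u},\overline{v})$ with $|\overline{u}|=|\overline{v}|\leq 3$, he also wins the $(4n+2)$-round $3$-pebble game from $(\overline{u},\overline{v})$. The key point is that the bound $4n+2$ is uniform: it depends on $n$ alone, and not on the chosen pair of signals or on the starting configuration.

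Next I would feed this into Theorem~\ref{thm:EF}. Given $n$, set $m := 4n+2$; the contrapositive just stated is exactly the hypothesis of Theorem~\ref{thm:EF} for $\mathcal{C}$ the class of signals and $k=3$. Theorem~\ref{thm:EF} then delivers the $3$-variable property for $(\mathbb{R},<,+1)$, which is the assertion of Theorem~\ref{thm:main}.

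There is essentially no obstacle remaining at this stage, since all the combinatorial content has already been established in Section~\ref{sec:local}, in the Composition Lemma, and in Proposition~\ref{prop:3-var}. The only things to verify are that the ``for all $n$ there exists $m$'' quantifier pattern of Theorem~\ref{thm:EF} is met by the explicit choice $m=4n+2$ --- which is immediate from the uniformity noted above --- and that the class relevant to Theorem~\ref{thm:EF} is indeed the class of all signals, as this is precisely the setting in which Proposition~\ref{prop:3-var} is stated.
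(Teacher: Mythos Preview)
Your proposal is correct and follows exactly the same route as the paper, which simply states that Theorem~\ref{thm:main} is obtained immediately from Proposition~\ref{prop:3-var} and Theorem~\ref{thm:EF}. You have merely made explicit the contrapositive and the uniform choice $m=4n+2$, which is precisely what ``immediately'' means here.
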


\section{Linear Functions}
\label{sec:arithmetic}
In this section we show the 3-variable property for
the $\sigma$-structure $(\mathbb{R},<,f)$ with
$f:\mathbb{R}\rightarrow\mathbb{R}$ a linear function $f(x)=ax+b$.
This follows fairly straightforwardly from our main result,
Theorem~\ref{thm:main}, using the classical compositional method for
sums of ordered structures.

\subsection{Monotone Linear Functions}
\label{sec:monotone}
Consider $f:\mathbb{R}\rightarrow \mathbb{R}$ given by $f(x)=ax+b$,
where $a,b\in \mathbb{R}$ and $a>0$.  We prove that $(\mathbb{R},<,f)$
has the 3-variable property.

Suppose that $a=1$, that is, $f(x)=x+b$.  If $b>0$ then
$(\mathbb{R},<,f)$ is isomorphic to $(\mathbb{R},<,+1)$.  If $b<0$
then $(\mathbb{R},<,f)$ is isomorphic to
$(\mathbb{R},<^{\mathrm{op}},+1)$, where $<^{\mathrm{op}}$ is the
opposite order on $\mathbb{R}$.  In either case $(\mathbb{R},<,f)$
inherits the 3-variable property from $(\mathbb{R},<,+1)$.

Assume now that $a\neq 1$. Notice that $f$ has a unique fixed point
$x^*=\frac{b}{1-a}$.  Moreover, considering the intervals
$I_0=(-\infty,x^*)$ and $I_1=(x^*,\infty)$, $f$ restricts to bijections
$f_i:I_i\rightarrow I_i$ for $i=0,1$.  Now the map
$\Phi_0(x)=-\log(x^*-x)$ defines an isomorphism of $\sigma$-structures
from $(I_0,<,f_0)$ to $(\mathbb{R},<,+a)$.  Likewise the map
$\Phi_1(x)=\log(x-x^*)$ defines an isomorphism from $(I_1,<,f_1)$ to
$(\mathbb{R},<,+a)$.  It follows that $(I_0,<,f_0)$ and $(I_1,<,f_1)$
both have the 3-variable property.

We argue that $(\mathbb{R},<,f)$ has the 3-variable property as
follows.  Let $\mathbf{A}$ and $\mathbf{B}$ be expansions of
$(\mathbb{R},<,f)$ with interpretations of the monadic predicate
variables.  Let $\mathbf{A}_0$ be the sub-structure of $\mathbf{A}$
with domain $I_0$ and let $\mathbf{A}_1$ be the sub-structure of
$\mathbf{A}$ with domain $I_1$.  Define $\mathbf{B}_0$ and
$\mathbf{B}_1$ likewise.  Then if Spoiler wins the $n$-round EF game
on $\mathbf{A}$ and $\mathbf{B}$ he also wins the $n$-round
game on the substructures $\mathbf{A}_0$ and $\mathbf{B}_0$ and the
$n$-round game on $\mathbf{A}_1$ and $\mathbf{B}_1$.  Thus there
exists $m$, depending only on $n$, such that Spoiler wins the
$m$-round 3-pebble EF games on $\mathbf{A}_0$ and $\mathbf{B}_0$ and
on $\mathbf{A}_1$ and $\mathbf{B}_1$.  Then by the usual composition
argument on sums of ordered structures~\cite{ImmermanK89}, we can show
that Spoiler wins the $m$-round 3-pebble game on $\mathbf{A}$ and
$\mathbf{B}$.

\subsection{Antitone Linear Functions}
\label{sec:antitone}
Consider a linear function $f(x)=ax+b$, where $a<0$.  Note that the
map $f^2:=f\circ f : \mathbb{R}\rightarrow \mathbb{R}$ is monotone and
linear.  The idea is to exploit the fact that $(\mathbb{R},<,f^2)$ has
the 3-variable property to rewrite a given monadic first-order
$\sigma$-sentence $\varphi$ to a 3-variable sentence $\varphi''$ that
is equivalent to $\varphi$ over $(\mathbb{R},<,f)$.  In this rewriting
it is convenient to use $x^*$ as an additional constant symbol in
intermediate forms, where $x^*$ is the unique fixed point of $f$. We
also allow nested applications of $f$ in intermediate formulas.

We obtain $\varphi''$ as follows.  Motivated by the fact that $f$ maps
the open interval $(x^*,\infty)$ onto $(-\infty,x^*)$ and \emph{vice
  versa}, working bottom-up, replace each subformula $\exists x\,\psi$
by
\[\exists x\, (x>x^* \wedge (\psi \vee \psi[f(x)/x] \vee \psi[x^*/x])) \, .\] 

Now simplify the atomic subformulas as follows, bearing in mind that
all variables range over $(x^*,\infty)$.  Replace every term
$f^n(x^*)$ with $x^*$.  Replace $f^n(x)=x^*$ with \textbf{false}.
Replace $f^n(x)=f^m(y)$ with $f^{n-1}(x)=f^{m-1}(y)$ if $n$ and $m$
are both odd, and with \textbf{false} if $n$ and $m$ have different
parity.  If $n$ is odd then replace $x^*<f^n(x)$ with \textbf{false}
and $f^n(x)<x^*$ with \textbf{true}.  Replace $f^n(x)<f^m(y)$ by
$f^{n-1}(x)<f^{m-1}(y)$ if $n$ and $m$ are both odd, by \textbf{true}
if $n$ is odd and $m$ is even, and by \textbf{false} if $n$ is even
and $m$ is odd.  Finally eliminate the constant symbol $x^*$ using the
fact that it is definable in terms of $f^2$, e.g., replace each
subformula $P(x^*)$ with $\exists y\,(y=f^2(y) \wedge P(y))$.

Let $\varphi'$ denote the sentence arising from the above
transformation.  Treating the atomic formulas $P(f(x))$ as unary
predicate variables, we can interpret $\varphi'$ as a monadic
first-order sentence over the structure $(\mathbb{R},<,f^2)$.  Since
$f^2$ is monotone we can use the result of Section~\ref{sec:monotone}
to transform $\varphi'$ to an equivalent 3-variable sentence
$\varphi''$ over $(\mathbb{R},<,f^2)$.  Then $\varphi''$ is equivalent
to $\varphi$ considered as a formula over the structure
$(\mathbb{R},<,f)$.

%
%

\section{Counterexample}
\label{sec:counterexample}
In this section we exhibit a countable dense linear order
$E$ and function $g:E \rightarrow E$ such that $(E,<,g)$ does not have
the $k$-variable property for any $k$.

Let $(S,<)$ be the set of non-empty finite sequences of integers under
the lexicographic order, and let $E$ be the equivalence relation on
$S$ that relates any two such sequences that end with the same
element.  Since the integers have no greatest or least element, any
non-empty interval in $S$ contains an element of each $E$-equivalence
class.  Venema~\cite{Venema90} has shown that the structure $(S,<,E)$
does not have the $k$-variable property for any $k$.  For example, one
can express the property \emph{``predicate $P$ holds on at least $k+1$
  $E$-inequivalent elements''} with $k+1$ variables but not $k$
variables. Indeed it is not hard to see that in the $k$-pebble EF game
(over any number of rounds) Spoiler cannot distinguish the cases that
predicate $P$ is a union of $k$ $E$-equivalence classes and that $P$
is a union of $k+1$ $E$-equivalence classes.

We next translate this example to the setting of linear orders with
unary functions.  Consider the equivalence relation $E$ above as an
ordered set under the lexicographic order on $S\times S$.  Define
$g:E\rightarrow E$ by $g(s,t)=(t,s)$ and consider the
$\sigma$-structure $\mathbf{E}=(E,<,g)$ (where $\sigma$ is the
signature for linear orders and unary functions, defined in
Section~\ref{sec:EF}).  Note that $g$ is very far from being monotone.

To each labelled expansion $\mathbf{S}$ of $(S,<,E)$ we associate a
labelled expansion $\mathbf{E}$ of $(E,<,g)$, where $P^{\mathbf{E}} =
\{(s,s):s\in P^{\mathbf{S}}\}$ for each monadic predicate symbol $P$.
There is moreover a one-dimensional interpretation $\Gamma$
(cf.\ Section~\ref{sec:interpretation}) of $\mathbf{S}$ in
$\mathbf{E}$.  The domain formula $\partial_\Gamma(x)$ of $\Gamma$ is
$x=g(x)$ so that $\partial_\Gamma (E)= \{(s,t) \in E:s=t\}$.  The
coding map $f_\Gamma:\partial_\Gamma (E) \rightarrow S$ is given by
$f_\Gamma(s,s)=s$.  The interpretation also specifies for each atomic
formula $\varphi(x_1,\ldots,x_m)$ over $\mathbf{S}$ a corresponding
formula $\varphi_\Gamma(x_1,\ldots,x_m)$ over
$\mathbf{E}$, with $\mathbf{S} \models \varphi[s_1,\ldots,s_m]$ if and
only if $\mathbf{E} \models
\varphi_\Gamma[(s_1,s_1),\ldots,(s_m,s_m)]$ for all $s_1,\ldots,s_m
\in S$.  This correspondence sends $x<y$ and $P(x)$ to themselves and
$E(x,y)$ to the formula $\psi(x,y)\vee\psi(y,x)$, where
\begin{align*}\psi(x,y)\,:=\, & \exists u\,(x<u<g(u)<y \, \wedge \\
&\forall v (x<v<u  \vee  g(u)<v<y
\rightarrow g(v)\neq v)) \, .
\end{align*}


Conversely there is a natural two-dimensional first-order
interpretation $\Gamma$ of $\mathbf{E}$ in $\mathbf{S}$.  The domain
formula is $\partial_\Gamma(x,y) = E(x,y)$, and thus
$\partial_\Gamma(S^2) = \{(s,t) \in S\times S : (s,t)\in E\}$.  The coding map
$f_\Gamma:\partial_\Gamma(S^2)\rightarrow E$ is given by
$f_\Gamma(s,t)=(s,t)$. The translation of atomic formulas over
$\mathbf{E}$ to corresponding formulas over $\mathbf{S}$ is similarly
straightforward, e.g., $x<y$ is mapped to $x_1<y_1 \vee(x_1=y_1
\wedge x_2<y_2)$.

As observed in Dawar~\cite[Section 3]{Dawar05} in a similar context,
the existence of such a two-way interpretation entails that if
$(E,<,g)$ has the $k$-variable property for some $k$ then $(S,<,E)$
has the $k'$-variable property for some $k'$.  It follows that
$(E,<,g)$ does not have the $k$-variable property for any $k$.

\section{Conclusion and Future Work}
We have shown that the structure $(\mathbb{R},<,f)$ has the 3-variable
property for linear functions $f:\mathbb{R}\rightarrow \mathbb{R}$.
In future work it would be natural to consider whether the
$k$-variable property holds, for some $k$, for richer classes of functions,
e.g., classes of polynomials.

Moving beyond the reals, we would like to explore whether the results
in this paper generalise to arbitrary linear orders and families of
monotone functions thereon.  More generally, there is the problem,
raised by Immerman and Kozen in the conclusion of~\cite{ImmermanK89},
of finding a model-theoretic characterisation of those classes of
structures possessing the $k$-variable property for some $k$.

In those settings in which the $k$-variable property holds, following
\cite{GroheS05}, it is natural to consider how the number of variables
affects the succinctness of formulas and, in view of~\cite{Gabbay81},
also to seek expressively complete temporal logics.

\bibliographystyle{plain}
\bibliography{metric}

\newpage

\begin{appendix}
	\section{Appendix}
	
	\subsection{Missing Proofs from Section \ref{sec:local}}

	\begin{propositionAppendix}{prop:relative-frac}
	Let $u_1\ldots u_s \equiv v_1\ldots v_s$ be
	two assignments. Then 
	\[ \mathrm{frac}(u_i-u_k)<\mathrm{frac}(u_j-u_k)
	   \;\Leftrightarrow\;
	   \mathrm{frac}(v_i-v_k)<\mathrm{frac}(v_j-v_k)
	\]
	for all indices $i,j,k \in \{1,\ldots,s\}$.
	\end{propositionAppendix}

	\begin{proof}
Fix $i,j,k \in \{1,\ldots,s\}$.
From the assumption $u_1\ldots u_s \equiv v_1\ldots v_s$ we 
have the following chain of equivalences:
\begin{eqnarray*}
\mathrm{frac}(u_i-u_k)<\mathrm{frac}(u_j-u_k) &\,\Leftrightarrow\,&
u_i-u_k-\lfloor u_i-u_k\rfloor < u_j-u_k-\lfloor u_j-u_k\rfloor\\
&\Leftrightarrow&
 u_i-u_j < \lfloor u_i-u_k\rfloor-\lfloor u_j-u_k\rfloor\\
&\Leftrightarrow&
 v_i-v_j < \lfloor u_i-u_k\rfloor-\lfloor u_j-u_k\rfloor\\
&\Leftrightarrow&
 v_i-v_j < \lfloor v_i-v_k\rfloor-\lfloor v_j-v_k\rfloor\\
&\Leftrightarrow&
v_i-v_k-\lfloor v_i-v_k\rfloor < v_j-v_k-\lfloor v_j-v_k\rfloor\\
&\Leftrightarrow&
\mathrm{frac}(v_i-v_k)<\mathrm{frac}(v_j-v_k) \, .
 \end{eqnarray*}
\qed
\end{proof}
%

\noindent \begin{propositionAppendix}{prop:composition}
	Suppose that $u_1 \ldots u_s$ and $v_1 \ldots v_s$ are both increasing
	and that $u_1 \ldots u_m \equiv v_1 \ldots v_m$ and
	$u_m \ldots u_s \equiv v_m \ldots v_s$ for some $m$, $1 \leq
	m \leq s$.  Then $u_1 \ldots u_s \equiv v_1 \ldots v_s$.

	\end{propositionAppendix} 

\begin{proof} We must show that
	$\lfloor u_j-u_i \rfloor = \lfloor v_j-v_i \rfloor$ for all
	$i \leq m < j$.  To this end, we observe that since $u_1\ldots
	u_s$ is increasing,
	\begin{eqnarray*}
	\mathrm{frac}(u_j-u_i) &=& \mathrm{frac}(u_j-u_m+(u_m-u_i))\\
	&=& \mathrm{frac}(u_j-u_m) + \mathrm{frac}(u_m-u_i) \, .
	\end{eqnarray*}
	It follows that
	\begin{eqnarray*}
	\lfloor u_j-u_i \rfloor &=& u_j-u_i - \mathrm{frac}(u_j-u_i)\\
	  &=& (u_j-u_m) + (u_m-u_i) - (\mathrm{frac}(u_j-u_m) + \mathrm{frac}(u_m-u_i))\\
&=& (u_j-u_m) - \mathrm{frac}(u_j-u_m) + (u_m-u_i) - \mathrm{frac}(u_m-u_i)\\
	  &=& \lfloor u_j-u_m \rfloor + \lfloor u_m-u_i \rfloor \, .
          \end{eqnarray*}
We can similarly show that 
\[ \lfloor v_j-v_i \rfloor = \lfloor v_j-v_m \rfloor + \lfloor v_m-v_i \rfloor \, .\]
But $\lfloor u_j-u_m \rfloor = \lfloor v_j-v_m \rfloor$ since
$u_m \ldots u_s \equiv v_m \ldots v_s$.  Likewise $\lfloor
u_m-u_i \rfloor = \lfloor v_m-v_i \rfloor$ since $u_1 \ldots
u_m \equiv v_1 \ldots v_m$.  We conclude that $\lfloor u_j-u_i \rfloor
= \lfloor v_j-v_i \rfloor$.
\qed
        \end{proof}

\end{appendix}

\end{document}